\newtheorem{theorem}{Theorem}
\newtheorem{lemma}{Lemma}
\definecolor{erdem}{rgb}{0,0,0}
\definecolor{superlightgray}{rgb}{.9,0.9,0.9}
\title{\huge Asynchronous Local Construction of Bounded-Degree Network Topologies Using Only Neighborhood  Information}
\begin{document}
\author{\color{erdem} Erdem Koyuncu, \emph{Member}, IEEE, and Hamid Jafarkhani, \emph{Fellow}, IEEE\thanks{\color{erdem}This work was supported in part by the DARPA GRAPHS program Award N66001-14-1-4061, and in part by the NSF Awards CCF-1814717 and CCF-1815339.}\thanks{\color{erdem}This work was presented in part \cite{confversion} at the IEEE Wireless Communications and Networking Conference, Apr. 2017.} \thanks{\color{erdem}E. Koyuncu is with the Department of Electrical and Computer Engineering, University of Illinois at Chicago. Email: ekoyuncu@uic.edu. H. Jafarkhani is with the Center for Pervasive Communications and Computing, University of California, Irvine. Email: hamidj@uci.edu.}}
\maketitle
\vspace{-45pt}
\begin{abstract}
We consider ad-hoc networks consisting of $n$ wireless nodes that are located on the plane. Any two given nodes are called neighbors if they are located within a certain distance (communication range) from one another. A given node can be directly connected to any one of its neighbors and picks its connections according to a unique topology control algorithm that is available at every node. Given that each node knows only the indices (unique identification numbers) of its one- and two-hop neighbors, we identify an algorithm that preserves connectivity and can operate without the need of any synchronization among nodes. Moreover, the algorithm results in a sparse graph with at most $5n$ edges and a maximum node degree of $10$. Existing algorithms with the same promises further require neighbor distance and/or direction information at each node. We also evaluate the performance of our algorithm for random networks. In this case, our algorithm provides an asymptotically connected network with $n(1+o(1))$ edges with a degree less than or equal to $6$ for $1-o(1)$ fraction of the nodes. We also introduce another asynchronous connectivity-preserving algorithm that can provide an upper bound as well as a lower bound on node degrees.
\end{abstract}
\begin{IEEEkeywords}
Topology control, local algorithms, connectivity, degree-bounded graphs.
\end{IEEEkeywords}

\section{Introduction} 
\subsection{Topology Control and its Objectives}
\label{secia}
Topology control is a powerful technique that is commonly used in ad-hoc wireless networks to reduce interference, provide  energy-efficient transmission, enable low-complexity routing, and so on \cite{topcontrolinadhocref, santisurvey, santibook}. It refers to the intelligent choice of connections between nodes so that the resulting graph representation of the network (with nodes and direct node-to-node connections respectively modeled as vertices and edges of the graph) satisfies certain properties such as connectivity.

We study the problem of topology control over plane networks with the disk-connectivity model. Specifically, we consider networks consisting of $n$ nodes that are indexed (and uniquely identified) by the natural numbers $1,\ldots,n$  with locations $x_1,\ldots,x_n\in\mathbb{R}^2$. A given node may only be directly connected to any other neighboring node that is within a certain communication range $R>0$ in a bidirectional manner.  As an example, a network consisting of $10$ nodes with no connections together with the communication range of Node $7$ is shown in Fig. 1(a). Node $7$ can be directly connected to any one of the nodes in its neighbor set $\{1,3,5,6\}$.

\begin{figure}[h]\vspace{5pt}
\begin{subfigure}[b]{\linewidth}
\begin{center}\scalebox{1}{\includegraphics{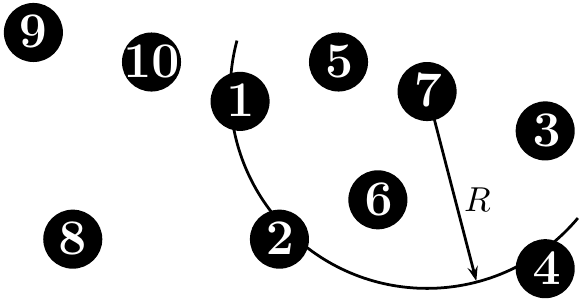}}\end{center}
\caption{The network with no connections. \\ \hspace{10pt} }
\end{subfigure}\vspace{5pt}
\begin{subfigure}[b]{\linewidth}
\begin{center}\scalebox{1}{\includegraphics{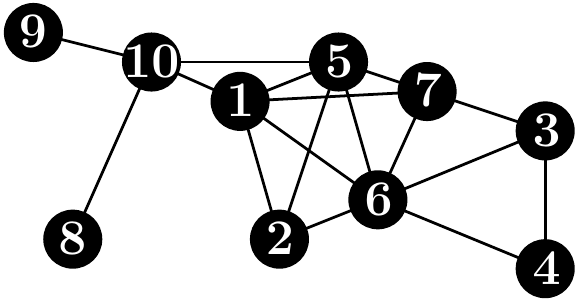}}\end{center}
\caption{The Gilbert graph corresponding to the node locations and the communication range illustrated in (a).}
\end{subfigure}\vspace{10pt}
\begin{subfigure}[b]{\linewidth}
\begin{center}\scalebox{1}{\includegraphics{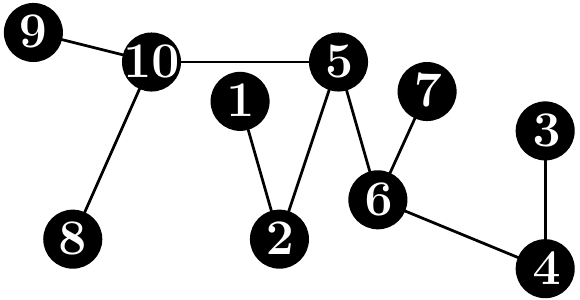}}\end{center}
\caption{A possible topology (a spanning subgraph of the Gilbert graph) generated by some topology control algorithm. }
\end{subfigure}
\caption{Instances of a network with $10$ nodes. Physical locations of the nodes are kept fixed throughout (a)-(c). The ``exact'' physical location of a given node is the center of the corresponding black disk.}
\label{networksfig}
\end{figure}

A special case is when all nodes within communication range are directly connected \cite{gilbert1}, which results in what we call the Gilbert graph $(\mathcal{V},g(\mathcal{V}))$ with
\begin{align}
g(\mathcal{W}) \triangleq \{(i,j):i,j\!\in\!\mathcal{W},\,i\!<\!j,\,|x_i-x_j| \!\leq\! R\},\,\mathcal{W}\!\subset\!\mathcal{V}.
\end{align}
Throughout the paper, $|\cdot|$ is the Euclidean metric. We note that Gilbert graphs are also often called unit-disk graphs whenever $R=1$ or with an appropriate normalization of node locations. As an example, Fig. 1(b) shows the Gilbert graph corresponding to the setup in Fig. 1(a). 

The primary goal of a topology control is then to provide a ``good''  spanning subgraph $(\mathcal{V},\mathcal{E})$ of the Gilbert graph $(\mathcal{V},g(\mathcal{V}))$. In this context, it is usually agreed upon that a good topology $(\mathcal{V},\mathcal{E})$ should satisfy the following properties:

\begin{enumerate}
\item \textbf{Connectivity:} The network $(\mathcal{V},\mathcal{E})$ is called connected if there is a path between any two distinct nodes in $\mathcal{V}$. It is clearly desirable to have a connected network so that information from one node may be conveyed to another (possibly through multiple hops) even if these two nodes are not directly connected. The algorithm/method that generates $(\mathcal{V},\mathcal{E})$ is called connectivity-preserving if $(\mathcal{V},\mathcal{E})$ is connected whenever $(\mathcal{V},g(\mathcal{V}))$ is connected. Conversely, since $(\mathcal{V},g(\mathcal{V}))$ is the largest feasible graph given the node locations, we note that $(\mathcal{V},\mathcal{E})$ can be connected only when  $(\mathcal{V},g(\mathcal{V}))$ is connected.
\item \textbf{Constant Stretch Factors:} Sometimes, the requirement of connectivity is further strengthened by imposing a low stretch factor as discussed in the following. Let us fix some $\alpha \geq 0$, and assign the weight $|x_i-x_j|^{\alpha}$ to every $(i,j)\in g(\mathcal{V})$. The cost of a given path is defined as the sum of the weights of the edges that appear in the path. For any $i,j\in\mathcal{V}$ with $i \neq j$, if the path with the lowest cost connecting Nodes $i$ and $j$ in $(\mathcal{V},\mathcal{E})$ is no more than $b$ times that in $(\mathcal{V},g(\mathcal{V}))$, we say that $(\mathcal{V},\mathcal{E})$ has an ``$\alpha$-stretch factor'' of $b$. The $0$- and $1$-stretch factors are commonly referred to as hop- and distance- stretch factors, respectively. For example, the network in Fig. 1(c) has a hop-stretch factor of $4$: Two nodes that are $h$-hop apart in Fig. 1(b) are no more than $4h$-hops apart in Fig. 1(c). Note that even if a given topology is connected, its (minimum) hop-stretch factor can be as large as $n-1$, while its $\alpha$-stretch factors for $\alpha > 0$ can be arbitrarily large. The stretch factors of a graph are related to the energy required for transmission of information from one node to another \cite{rodoplu, lmst}. It is thus desirable to have constant stretch factors that are as small as possible.

\item \textbf{Sparseness:} The network $(\mathcal{V},\mathcal{E})$ is called a sparse network if $|\mathcal{E}| \leq c n $ for some constant $c \geq 0$. A sparse network is desirable as the computational complexity of routing grows with the number of edges in the network. We note that a Gilbert graph is, in general, not a sparse network as it can have as many as $\frac{1}{2}n(n-1)$ edges. 
\item \textbf{Constant Maximum Degree:} The degree of a Node $i\in\mathcal{V}$ is the number of nodes that are directly connected to Node $i$. The existence of nodes with high degrees is not desirable in wireless networks due to several practical issues such as radio interference{\color{erdem}\cite{gupta1, vaze1, koyuncu1}}. In fact, in practice, a given wireless node can be connected to at most a finite number of nodes at any given time, merely due to the fact that there can be at most a finite number of non-interfering frequency bands. In some cases, physical limitations of wireless devices themselves necessitate degree restrictions. Also, several communication standards have ``built-in'' node degree constraints; for example, in Bluetooth networks, a master node can be connected to at most $7$ active slave nodes at a given time \cite{bluetoothsurvey}. It is thus desirable that the degree of every node in $(\mathcal{V},\mathcal{E})$ is no more than a constant $d \geq 0$ that is independent of $n$. The maximum degree of a general Gilbert graph can be as high as $n-1$. We also note that a graph with constant maximum degree is also a sparse graph, but the converse is not true in general.
\end{enumerate}

Consider now the problem of generating good topologies with the four desired properties listed above. The first two and the last two of the properties are mutually complementary. On the other hand, satisfying any one of the first two properties together with any one of the last two properties represent two contradicting goals. Also, in practice, it is unreasonable to expect the topology to be generated and imposed upon by a decision center that has global knowledge on the nodes' physical locations and identities. Instead, the topology should ideally be generated locally in a distributed fashion with every node picking its own connections using as little information from its neighboring nodes as possible. The design of such practical topology control methods has thus been a major avenue of research in the field of networking. We provide an overview of some of the relevant literature next.

\subsection{Related Work}
\label{secib}
There has been numerous works on local construction of topologies with some or all of the four desired properties as listed in Section \ref{secia}. Several well-known structured graphs have been a source of inspiration for many of these studies. For example, topology control algorithms have been inspired by Delaunay triangulations \cite{hu1, wang1, gao1journ, li4, bluedut1}, Gabriel graphs \cite{song1, li3, li2}, the minimum spanning tree \cite{yeniref1, lmst}, Yao graphs \cite{song1, li3, li2, watt2},  relative neighborhood graphs \cite{li3, yeniref1, li1}, or maximal independent sets \cite{yeniref2} and several new topologies have been discovered and analyzed in the process. There are also other approaches to topology control; for example, algorithms that allow the adjustment of the communication ranges of each node have been investigated \cite{ramanat1}. We refer to  \cite{topcontrolinadhocref, santisurvey, santibook} for a general detailed treatment of topology control including other algorithms.

One can conclude from the definitions in Section \ref{secia} that a constant stretch factor implies the preservation of connectivity, and a constant maximum node degree implies sparseness. Hence, the most difficult topologies to construct have been the ones with constant stretch factors and maximum node degrees. In fact, existing algorithms that can provide such topologies (see, e.g. \cite{wang1, song1}) require each node to know its exact geographical location (e.g. via GPS) as well as the locations of their neighbors and rely on several complex stages of message exchange between neighboring nodes. Other works have considered scenarios where each node has limited information about its neighbors. A notable algorithm is the XTC algorithm in \cite{watt1}, where each node is assumed to only know its distance to its neighbors as well as a certain ordering of its one- and two-hop neighbors. The XTC algorithm can provide a connected network with constant maximum degree. Another example is the CBTC algorithm in \cite{watt2} which can operate with neighbor direction information at the nodes.

The availability of node geographical information (in the form of direction, distance, or both) has been a common assumption in all the above works on topology control. The acquisition and communication of geographical information, especially exact geographical information, are however both non-trivial tasks in practice. It is thus desirable to drop the requirement of geographical information entirely and focus on algorithms that can operate only with neighborhood  information. {\color{erdem} Some of the effort in this context has focused on achieving sparse almost sure connectivity instead of preserving connectivity whenever possible; see e.g. the $k$-Neigh protocol of \cite{kneigh} that is based on \cite{xue1}, or the random Bluetooth networks analyzed in \cite{broutin1}. These works do not consider node degree restrictions. On the other hand, an XTC-like algorithm that does not rely on distance information has been proposed in \cite{islam1}, but it can only preserve connectivity without any guarantees on sparseness or node degrees. 

Another approach to position-unaware topology control is to utilize the connected dominating set (CDS) \cite{cds1, cds2, cds3, yeniref4, yeniref5, yeniref6} of the network. It has been shown in \cite{yeniref4} that by using only neighborhood  information, one can construct connected sparse topologies via a minimum or close-to-minimum CDS. It is not clear, however, how to obtain a degree-bounded topology using the idea of a CDS with neighborhood  information only. For example, \cite{bluedut1, yeniref4, yeniref5} require extra position information at each node to obtain a CDS-based topologies with bounded node degrees. 

Position-unaware topology control has also been a major focus of research on Bluetooth scatternet formation \cite{bluetoothsurvey} with several proposed algorithms such as BlueStars \cite{bluestars}, BlueMesh \cite{bluemesh}, BlueMIS \cite{bluemis}, and BSF-UED \cite{bsfued}. Some of these algorithms can provide degree-bounded topologies, but the degree bound holds for only the master nodes of the network and not for all the nodes of the network. In this context,
construction of network topologies with a constant degree bound at every node and without position information at nodes has been described \cite{bluemis} as ``an interesting and major open problem in the area.'' In fact, in this paper, we solve the very same problem. Next, we describe the properties of our solution.

\subsection{Our Contributions}
In this work, we assume that each node only knows the indices of its one- and two-hop neighbors without any extra geographical information. Under this restriction, we design a local algorithm that preserves connectivity, results in a sparse network with at most $5n$ edges, and meanwhile guarantees a constant maximum node degree of $10$. {\color{erdem} With the same restrictions, to the best of our knowledge, there is no existing local algorithm that can provide connectivity with bounded degree.} We also present an average case evaluation of our algorithm and show that the algorithm can in fact preserve connectivity with the almost-optimal amount of $n(1+o(1))$ edges and a degree less than or equal to $6$ for $1-o(1)$ fraction of the edges. We also note that the same algorithm can be applied to the scenario where the nodes are located on $\mathbb{R}^d,\,d\in\{1,2,\ldots\}$, and will similarly provide degree-bounded sparse connected topologies.

Our results show that neighborhood  information by itself can provide several fundamental properties that are desirable in ad-hoc wireless network topologies. One can however only achieve so much with only neighborhood  information. In fact, as we shall discuss in more details later, it is not difficult to show that  for any $\alpha  > 0$, no algorithm, even with a global knowledge of the network neighborhood  information, can provide a constant $\alpha$-stretch factor and a sparse network at the same time. Similarly, no algorithm can provide a constant hop-stretch factor and a constant degree bound at the same time. {\color{erdem} We shall emphasize that there are many algorithms in the existing literature that can guarantee connectivity with degree-bounded nodes and even finite stretch factors. Such algorithms were discussed in Section \ref{secib}. However, all of these algorithms require position information at the networking nodes. What distinguishes this work from the existing literature is that we present the first local algorithm that does not need any position information and can preserve connectivity with degree-bounded nodes.} We also introduce another asynchronous algorithm that provides both an upper bound and a lower bound on node degree.

We note that part of this work \cite{confversion} has been presented at the IEEE Wireless Communications and Networking Conference in April 2017. Compared to  \cite{confversion}, the current manuscript provides:
\begin{itemize}
\item the formal proofs for the average performance of the algorithm (the corresponding results in Theorem 2 of this paper were merely stated in \cite{confversion} without proof), 
\item a new section that discusses the achievability of stretch factors using only neighborhood information together with new corresponding simulation results,
\item comparisons with the existing algorithms in the literature such as XTC and $k$-Neigh, 
\item a new algorithm that provides a topology with guaranteed degree lower bounds, and finally, 
\item implementation details and communication complexity of our topology control algorithm.
 \end{itemize}

\subsection{Organization}
The rest of the paper is organized as follows: In Section \ref{secid}, we introduce the notation and conventions that will be used throughout the paper. In Section \ref{secii}, we present our topology control algorithm and formally prove its properties. In Section \ref{seciii}, we present an average case analysis of our algorithm. In Section \ref{secpracticalities}, we discuss the practicalities that are associated with our topology control algorithm. In Section \ref{seciv}, we investigate the achievability of constant stretch factors using only neighborhood information. In Section \ref{secdeglb}, we consider the construction of robust graphs that provide a degree lower bound in addition to a degree upper bound. In Section \ref{secv}, we present a numerical evaluation of our algorithms. Finally, in Section \ref{secvi}, we draw the main conclusions.

\subsection{Notation and Conventions}
\label{secid}
Given $i,j\in\mathcal{V}$ with $i \neq j$, we say that Nodes $i$ and $j$ are two \textbf{neighboring nodes}, or simply \textbf{neighbors} if $|x_i - x_j| \leq R$. Throughout the paper, we will only consider simple graphs (i.e., undirected graphs with no self-loops or multiple edges) of the form $(\mathcal{W},\mathcal{F})$ with $\mathcal{W}\subset\mathcal{V}$ and $\mathcal{F}\subset g(\mathcal{W})$. Given any such graph/network $(\mathcal{W},\mathcal{F})$, and any two indices/nodes $i,j\in\mathcal{W}$ with $i \neq j$, we say that Nodes $i$ and $j$ are \textbf{directly connected} in $(\mathcal{W},\mathcal{F})$ if $(i,j)\in\mathcal{F}$. Note that $(i,j)$ and $(j,i)$ will always represent the same edge. 

A \textbf{path} $\mathbf{p} \triangleq (p_1,\ldots,p_{|\mathbf{p}|})$ in the graph $(\mathcal{W},\mathcal{F})$ is a vector of distinct elements  of $\mathcal{W}$ such that $|\mathbf{p}| \geq 2$ and $(p_i,p_{i+1})\in\mathcal{F},\,\forall i\in\{1,\ldots,|\mathcal{P}|-1\}$. We say that Nodes $i$ and $j$ are \textbf{path-connected} in $(\mathcal{W},\mathcal{F})$ if there is a path $\mathbf{p}$ in $(\mathcal{W},\mathcal{F})$ with $p_1 = i$ and $p_{|\mathbf{p}|} = j$. A network $(\mathcal{W},\mathcal{F})$ is called \textbf{connected} if there is a path in $(\mathcal{W},\mathcal{F})$ between any two distinct nodes in $\mathcal{W}$. By definition, any two nodes that are directly connected are also path-connected.

A \textbf{connected component} of $(\mathcal{W},\mathcal{F})$ is a connected subgraph $(\mathcal{W}',\mathcal{F}')$ of $(\mathcal{W},\mathcal{F})$ such that (i) for any $i\in\mathcal{W}'$ and $j\in\mathcal{W}-\mathcal{W}'$, Nodes $i$ and $j$ are not path-connected in $(\mathcal{W},\mathcal{F})$, and (ii) for any $i,j\in\mathcal{W}'$, we have $(i,j)\in\mathcal{F} \!\implies\! (i,j)\in\mathcal{F}'$.

\section{The Main Algorithm}
\label{secii}
In this section, we present an algorithm that preserves connectivity and results in a sparse network with at most $5n$ edges and a maximum node degree of $10$. The setup in which the algorithm operates is as follows: Initially, we consider a network without any connections. The unique algorithm will be available to every node, and when ``run,'' will directly connect its ``host node'' (i.e., the node that is running the algorithm) to a certain subset of its host's neighboring nodes. All the  connections initiated by the algorithm will be bidirectional. Running the algorithm at every node exactly once will result in the topology with the aforementioned properties. Nodes will be able to run the algorithm in an arbitrary order, or simultaneously in a completely asynchronous fashion.

Let us now present the algorithm itself. A key definition we need is the notion of a \textbf{lesser neighborhood} of a node. For any $i\in\mathcal{V}$, we define the lesser neighborhood of Node $i$ as
\begin{align}
\mathcal{N}_i \triangleq \{j:j\in\mathcal{V},\,j<i,\,|x_i - x_j| \leq R\}.
\end{align}
Thus, the lesser neighborhood of Node $i$ are neighbors of Node $i$ whose indices are less than $i$. 

We recall from Section \ref{secia} that the Gilbert graph generated by the vertex set $\mathcal{W} \subset \mathcal{V}$ is given by the graph $(\mathcal{W},g(\mathcal{W}))$, where $g(\mathcal{W}) \triangleq \{(i,j):i,j\in\mathcal{W},\,i<j,\,|x_i-x_j| \leq R\}$. In other words, when all the nodes in $\mathcal{W}$ that are within communication range are directly connected, we obtain the Gilbert graph $(\mathcal{W},g(\mathcal{W}))$ generated by $\mathcal{W}$. Consider now the Gilbert graph $(\mathcal{N}_i,g(\mathcal{N}_i))$ generated by the lesser neighborhood of Node $i$. Let $J_i$ denote the number of connected components of $(\mathcal{N}_i,g(\mathcal{N}_i))$. Since each connected component of a Gilbert graph is necessarily also a Gilbert graph, we can list the connected components of $(\mathcal{N}_i,g(\mathcal{N}_i))$ as $(\mathcal{N}_{ij},g(\mathcal{N}_{ij})),\,j=1,\ldots,J_i$, where $\mathcal{N}_{ij},\,j=1,\ldots,J_i$ are mutually disjoint subsets of $\mathcal{N}_i$ with $\bigcup_{j=1}^{J_i} \mathcal{N}_{ij} = \mathcal{N}_i$. For any set $\mathcal{A}$, let $\max \mathcal{A}$ denote the maximum element of the set $\mathcal{A}$. Our algorithm (at Node $i$) is then as shown as Algorithm \ref{mainalgo}.

\begin{algorithm}
\caption{The Main Topology Control Algorithm (at Node $i$)}
\begin{algorithmic}[1]
\State Connect to all nodes in the set $\{\max \mathcal{N}_{ij}:1 \leq j\leq J_i\}$.
\end{algorithmic}
\label{mainalgo}
\end{algorithm}

We have previously mentioned that the nodes may run the algorithm in arbitrary order as long as each node runs the algorithm exactly once. In fact, it is easily observed that the order in which the nodes run the algorithm does not affect the final topology as long as each node runs the algorithm \emph{at least} once. All the different possibilities in this context will lead to the same final topology that we shall refer to as $(\mathcal{V},\mathcal{A})$.

\subsection{An Example Run}
We now demonstrate how the algorithm operates over the setup in Fig. 1(a). Suppose that initially there are no connections in the network. We illustrate how the algorithm (when it runs at Node $6$) determines the direct connections to be initiated by Node $6$. The lesser neighborhood of Node $6$ is given by $\mathcal{N}_6 = \{1,2,3,4,5\}$, as shown in Fig. 2(a). Note that Node $6$ itself and its ``greater'' neighbor Node $7$ are not members of $\mathcal{N}_6$. The next step for Node $6$ is to calculate the Gilbert graph $(\mathcal{N}_6,g(\mathcal{N}_6))$ induced by $\mathcal{N}_6$. This graph is as shown in Fig. 2(b) and has $J_6 = 2$ connected components $(\mathcal{N}_{61},g(\mathcal{N}_{61}))$ and $(\mathcal{N}_{62},g(\mathcal{N}_{62}))$ where $\mathcal{N}_{61} = \{1,2,5\}$ and $\mathcal{N}_{62} = \{3,4\}$. Finally, we have $\max \mathcal{N}_{61} = 5$ and $\max \mathcal{N}_{62} = 4$, so that Node $6$ will initiate a connection to Nodes $4$ and $5$. The corresponding two undirected edges that will be added to the initial graph will be $(4,6)$ and $(5,6)$. 

\begin{figure}[h]\begin{center}
\begin{subfigure}[b]{\linewidth}
\begin{center}\scalebox{1}{\includegraphics{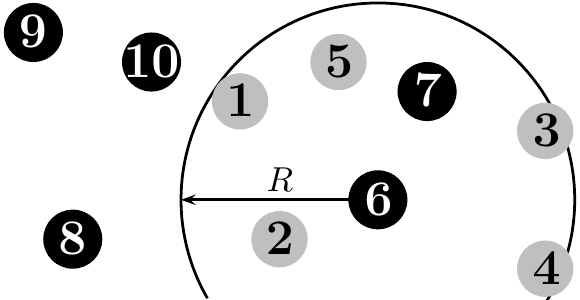}}\end{center}
\caption{Lesser neighbors $\mathcal{N}_6 = \{1,2,3,4,5\}$ of Node $6$. They are illustrated as gray disks.}
\end{subfigure}\hspace{10pt}
\begin{subfigure}[b]{\linewidth}
\begin{center}\scalebox{1}{\includegraphics{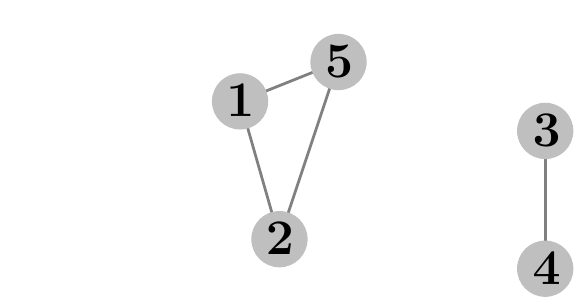}}\end{center}
\caption{The Gilbert graph induced by the lesser neighborhood of Node $6$.}
\end{subfigure}
\end{center}
\caption{The steps as to how Node $6$ determines its connections using the algorithm.}
\label{networksfig}
\end{figure}

In fact, running the algorithm at each node at least once results in the final network topology that we have previously shown in Fig. 1(c). For example, Node $1$, having no lower neighbors ($J_1 = 0$), will not initiate a connection to any other node. On the other hand, for Node $2$, we have $J_1 = 1$ with $\mathcal{N}_{21} = \{1\}$, so that Node $2$ will initiate a connection to Node $1$. Hence, Node $1$ in fact gets connected to Node $2$, even though it is not Node $1$ that initiates this connection.

\subsection{Analysis of the Algorithm}
\label{secanalysisofthealgorithm}
We now analyze the properties of the resulting topology $(\mathcal{V},\mathcal{A})$ generated by the algorithm. The following observation will be very useful for this purpose.
\begin{lemma}
In Algorithm \ref{mainalgo}, each node initiates at most $5$ connections. In other words, $J_i \leq 5$ for any $i\in\mathcal{V}$ (and for any given realization of node locations.).
\end{lemma}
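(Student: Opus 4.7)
The plan is to bound $J_i$ by a standard geometric packing argument in the disk of radius $R$ centered at $x_i$. From each of the $J_i$ connected components of $(\mathcal{N}_i,g(\mathcal{N}_i))$, I pick one representative node, giving locations $y_1,\ldots,y_{J_i}$. Two observations are immediate: (i) each $y_j$ satisfies $|y_j - x_i| \leq R$ since $y_j \in \mathcal{N}_i$, and (ii) any two distinct representatives $y_j,y_k$ from different components must satisfy $|y_j - y_k| > R$, because otherwise $(y_j,y_k)$ would be an edge in $g(\mathcal{N}_i)$, merging the two components. So the task reduces to a purely geometric question: how many points can lie inside a closed disk of radius $R$ with pairwise distances strictly exceeding $R$?

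Next, I would show that any two such representatives subtend an angle greater than $\pi/3$ at $x_i$. Assuming temporarily that no representative coincides with $x_i$, let $\theta = \angle y_j x_i y_k$, and set $a = |y_j - x_i|$, $b = |y_k - x_i|$, so $a,b \in (0,R]$. By the law of cosines, if $\theta \leq \pi/3$, then $\cos\theta \geq 1/2$ and
\begin{align}
|y_j - y_k|^2 \leq a^2 + b^2 - ab.
\end{align}
A short calculation shows that $f(a,b) \triangleq a^2 + b^2 - ab$ attains its maximum $R^2$ on $[0,R]^2$ only at the corners $(R,0)$, $(0,R)$, $(R,R)$, so $|y_j - y_k| \leq R$, contradicting $|y_j - y_k| > R$. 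Hence $\theta > \pi/3$ for every pair.

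Now the $J_i$ representatives, viewed from $x_i$, have pairwise angular separations exceeding $\pi/3$; ordering them by angle, the consecutive angular gaps are each strictly greater than $\pi/3$ and sum to $2\pi$, which forces $J_i \cdot (\pi/3) < 2\pi$, i.e.\ $J_i \leq 5$. To close the edge case where some representative coincides with $x_i$, observe that any other lesser neighbor $z$ satisfies $|z - x_i| \leq R$, hence $|z - y_j| \leq R$ for that representative $y_j = x_i$, so $z$ is in the same Gilbert component as $y_j$; this forces $J_i = 1$, which is again $\leq 5$.

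The only step requiring any real work is the geometric inequality $f(a,b) \leq R^2$ on $[0,R]^2$, and once that is in hand the angular counting is routine. The main conceptual obstacle is simply the realization that one should pick arbitrary representatives and use the strict separation $|y_j - y_k| > R$ together with the containment in the disk of radius $R$; after that, the argument is the same classical $6$-neighbor packing bound, with the strict inequality giving the refinement from $6$ to $5$.
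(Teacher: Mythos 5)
Your proof is correct and follows essentially the same route as the paper's: one representative per component of $(\mathcal{N}_i,g(\mathcal{N}_i))$, lying in the disk of radius $R$ about $x_i$ with pairwise distances exceeding $R$, forces pairwise angles at $x_i$ greater than $60^{\circ}$, and summing the angular gaps to $360^{\circ}$ gives $J_i\leq 5$. The only differences are cosmetic (law of cosines in place of the longest-edge/largest-angle triangle fact, plus explicit handling of the degenerate case $y_j=x_i$), so no further comparison is needed.
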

\begin{proof}
Suppose that a given Node $i$ initiates connections to both Nodes $j_1$ and $j_2$; see Fig. 3 for an illustration. The obvious neighborhood conditions $|x_i - x_{j_1}| \leq R$ and $|x_i - x_{j_2}| \leq R$ hold. By the design of the algorithm, we also have $|x_{j_1} - x_{j_2}| > R$ (As otherwise, if $|x_{j_1} - x_{j_2}| \leq R$, Nodes $j_1$ and $j_2$ would belong to the same connected component, say $\mathcal{N}_{i\ell}$ for some $\ell\in\{1,\ldots,J_i\}$ of the Gilbert graph generated by $\mathcal{N}_i$. Then, since Node $i$ initiates a connection to only one of the nodes in $\mathcal{N}_{i\ell}$, it would then be absurd that it connects to both Nodes $j_1$ and $j_2$.). The three inequalities above imply that the edge $x_{j_1}x_{j_2}$ is the longest edge of the triangle $x_{j_1}x_ix_{j_2}$. This leads to the strict inequality $\theta_1 > 60^{\circ}$. Using the same arguments, we obtain $\theta_j > 60^{\circ},\,\forall j\in\{1,\ldots,J_i\}$. Now, assume the contrary to the statement of the lemma and suppose $J_i \geq 6$. We have $360^{\circ} = \theta_1 + \cdots + \theta_{J_i} > J_i 60^{\circ} \geq 360^{\circ}$. This is a contradiction that proves the lemma.
\end{proof}
\begin{figure}[h]
\center
\scalebox{1.1}{\includegraphics{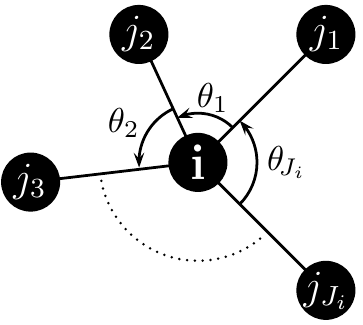}}
\caption{Figure for the proof of Lemma 1.}
\end{figure}

The following theorem is then the main result of this paper.
\begin{theorem}
\label{maintheorem}
The graph $(\mathcal{V}\!,\mathcal{A})$ is connected if and only if the Gilbert graph $(\mathcal{V}\!,g(\mathcal{V}))$ is connected. Moreover, we have $|\mathcal{A}| \leq 5n$ and the degree of each node in  $(\mathcal{V},\mathcal{A})$ is no more than $10$.
\end{theorem}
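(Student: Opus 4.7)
The plan is to decompose the theorem into its three claims and to dispatch the easy pieces first. The edge-count bound follows immediately from Lemma 1: every edge of $\mathcal{A}$ is initiated by a unique node, namely its larger-indexed endpoint, since a node only initiates connections to members of its lesser neighborhood. Summing the per-node bound $J_i \leq 5$ over all $i$ gives $|\mathcal{A}| = \sum_{i=1}^n J_i \leq 5n$. The forward direction of the connectivity claim, ``$(\mathcal{V},\mathcal{A})$ connected $\Rightarrow$ $(\mathcal{V},g(\mathcal{V}))$ connected,'' is also immediate because $\mathcal{A} \subseteq g(\mathcal{V})$, so any path in $(\mathcal{V},\mathcal{A})$ is already a path in $(\mathcal{V},g(\mathcal{V}))$.

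For the degree bound, I would split the neighbors of Node $i$ in $(\mathcal{V},\mathcal{A})$ into two classes: (a) indices $j<i$ to which Node $i$ itself initiates a connection, and (b) indices $k>i$ that initiate a connection to $i$. Class (a) contributes at most $J_i\leq 5$ by Lemma 1. To bound class (b), I would re-use the $60^{\circ}$ angle-packing argument of Lemma 1, which only requires that every pair of ``incoming'' endpoints be at distance strictly greater than $R$ from each other. To establish this, I would argue by contradiction: if $k_1 < k_2$ both connect to $i$ and $|x_{k_1}-x_{k_2}|\leq R$, then $k_1, i \in \mathcal{N}_{k_2}$ with $(k_1,i)\in g(\mathcal{N}_{k_2})$, so $i$ and $k_1$ lie in the same connected component of $(\mathcal{N}_{k_2},g(\mathcal{N}_{k_2}))$; since Node $k_2$ only connects to the maximum of each component and $k_1>i$, Node $k_2$ cannot have chosen $i$, a contradiction. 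With pairwise separation in hand, the Lemma 1 argument yields at most $5$ class-(b) connections, so the total degree is at most $10$.

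The remaining substantive claim is that $(\mathcal{V},g(\mathcal{V}))$ connected implies $(\mathcal{V},\mathcal{A})$ connected. I would prove the following stronger edge-wise statement by strong induction on $j$: for every $(i,j)\in g(\mathcal{V})$ with $i<j$, the nodes $i$ and $j$ are path-connected in $(\mathcal{V},\mathcal{A})$. Assuming this for all pairs whose larger index is strictly below $j$, consider $(i,j)\in g(\mathcal{V})$. Then $i\in\mathcal{N}_j$ lies in some component $\mathcal{N}_{j\ell}$ of $(\mathcal{N}_j,g(\mathcal{N}_j))$, and by construction the edge $(\max\mathcal{N}_{j\ell},j)$ belongs to $\mathcal{A}$. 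Pick a path $i=p_1,\ldots,p_m=\max\mathcal{N}_{j\ell}$ inside that component; every edge $(p_k,p_{k+1})$ lies in $g(\mathcal{N}_j)\subseteq g(\mathcal{V})$ and has both endpoints in $\mathcal{N}_j$, hence strictly less than $j$. By the induction hypothesis each such edge is realized by a path in $(\mathcal{V},\mathcal{A})$, and concatenating these paths and appending the edge $(\max\mathcal{N}_{j\ell},j)\in\mathcal{A}$ yields a path from $i$ to $j$ in $(\mathcal{V},\mathcal{A})$. Global connectivity then follows by lifting any Gilbert-graph path edge-by-edge.

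The main obstacle is the class-(b) degree bound: the $60^{\circ}$ argument of Lemma 1 does not immediately transfer, and one has to extract the pairwise-separation fact for incoming neighbors via the ``maximum-of-component'' contradiction sketched above. The edge count and the inductive path-lifting for connectivity are comparatively routine once Lemma 1 is available.
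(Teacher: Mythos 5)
Your proposal is correct and follows essentially the same route as the paper's proof: the edge count and the bound on self-initiated connections come from Lemma 1, the bound on incoming connections from higher-indexed nodes combines the $60^{\circ}$ packing argument with the maximum-of-component rule exactly as in the paper, and your strong induction on the larger index is just a cleaner phrasing of the paper's finite-descent argument for lifting Gilbert-graph edges to paths in $(\mathcal{V},\mathcal{A})$. No gaps to report.
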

\begin{proof}
For the statement regarding connectivity, we only need to prove the ``if'' part with the ``only if'' part being trivial. Suppose $(\mathcal{V},g(\mathcal{V}))$ is connected. Then, for any given two nodes in $\mathcal{V}$, there is a path in $(\mathcal{V},g(\mathcal{V}))$ that connects these two nodes with each edge in the path consisting of two neighboring nodes. To show that $(\mathcal{V},\mathcal{A})$ is connected, it is thus sufficient to show that any two neighboring Nodes $i$ and $j$  are path-connected in $(\mathcal{V},\mathcal{A})$. To prove this, we may assume that $i < j$ without loss of generality. First, note that if $i=j-1$, then, by design, Node $j$ will initiate a connection to Node $i$ and Nodes $i$ and $j$ will be path-connected. Otherwise, $\exists k\in\mathcal{V}$ with $i < k < j$ such that (i) Node $j$ initiates a connection to Node $k$, and (ii) there is a path $\mathbf{p}$ in $(\mathcal{V},g(\mathcal{V}))$ connecting Node $k$ to Node $i$ such that the index of each node in $\mathbf{p}$ is no more than $k \leq j-1$. It is then sufficient to show that any two distinct neighbor nodes that appear in $\mathbf{p}$ are path-connected in $(\mathcal{V},\mathcal{A})$. On the other hand, to prove this latter claim, it is sufficient to show that any two neighboring Nodes $i'$, $j'$ with $i'<j' \leq j-1$ are path connected in $(\mathcal{V},\mathcal{A})$. 

By above arguments, we have established the following statement: Any two neighboring Nodes $i$ and $j$ with $i < j$ are path-connected in $(\mathcal{V},\mathcal{A})$ if either $i=j-1$ or any two neighboring Nodes $i'$ and $j'$ with $i' < j' \leq j - 1$ are path-connected in $(\mathcal{V},\mathcal{A})$. This statement describes a finite descent that immediately leads to the path-connectedness of Nodes $i$ and $j$. In fact, applying the statement on itself, any two neighboring Nodes $i$ and $j$ with $i < j$ are path-connected in $(\mathcal{V},\mathcal{A})$ if either $i=j-1$, or $i=j-2$, or any two neighboring Nodes $i'$ and $j'$ with $i' < j' \leq j - 2$ are path-connected in $(\mathcal{V},\mathcal{A})$. Hence, any two neighboring Nodes $i$ and $j$ with $i < j$ are path-connected in $(\mathcal{V},\mathcal{A})$ if $i = j-k$ for some natural number $k$, which is clearly true. This concludes the proof of the claim on connectivity.

We now prove the rest of the claims. The inequality $|\mathcal{A}| \leq 5n$  follows immediately as each node initiates at most $5$ connections by Lemma 1. We now prove the degree bound. Let $i\in\mathcal{V}$. By design, a node with a lower index ($<i$) cannot initiate a connection to Node $i$. On the other hand, Node $i$ itself initiates at most $5$ connections. To show a maximum node degree of $10$, it is thus sufficient to show that there are at most $5$ nodes with a higher index ($>i$) initiating a connection to Node $i$. Assume the contrary and suppose there are $6$ or more such nodes. Two of these nodes, say Nodes $j$ and $k$ (with $j<k$ without loss of generality) should then be within communication range as well as being within range of Node $i$. This implies $\{i,j\} \subset \mathcal{N}_{k\ell}$ for some $\ell\in\{1,\ldots,5\}$ with $i\notin \mathcal{N}_{k\ell'}$ and $j\notin \mathcal{N}_{k\ell'}$ for $\ell'\neq \ell$. Since $\max C_{k\ell} \geq \max\{i,j\} = j > i$, and $i \notin  \mathcal{N}_{k\ell'}$ for $\ell' \neq \ell$, we have, in fact, $\max \mathcal{N}_{k\ell} \neq i$ for every $\ell$. This contradicts the fact that Node $k$ initiates a connection to Node $i$ and thus proves the degree bound. 
\end{proof}

The degree bound of $10$ is tight in the sense that there are certain realizations of node locations for which the resulting topology $(\mathcal{V},\mathcal{A})$ has a node with degree $10$. A minimal example is with $11$ nodes, $x_6 = [0\,\,0]$, and $x_i = R[\cos\frac{i\pi}{5}\,\,\sin\frac{i\pi}{5}],\,i\in\{1,\ldots,11\}-\{6\}$. It does not seem to be as trivial, however, to find node locations that result in as much as $5n$ edges. In fact, as we show in the next section, the number of edges in most connected topologies that are generated by the algorithm is closer to $n$ than $5n$. We will also show that the maximum node degree in most networks generated by the algorithm is $6$.

\newcommand{\algorithmintuition}{Several variations of Algorithm \ref{mainalgo} can be envisioned. Some of these variations also provide useful insights on how and why the algorithm provides a degree-bounded topology and preserves connectivity at the same time. In this context, we discuss here the variant where Node $i$ connects to one arbitrary node in each of the sets $\mathcal{N}_{ij},\,j=1,\ldots,J_i$ instead of connecting to the nodes $\max N_{ij},\,j=1,\ldots,J_i$ with the maximum indices. Using the same arguments as in the proof of Theorem \ref{maintheorem}, it is straightforward to show that the variant algorithm preserves connectivity and provides a sparse graph with at most $5n$ edges. However, it does not provide a degree-bounded graph in general: Suppose all $n$ nodes are mutually within communication range. Running the variant algorithm, all nodes (except Node $1$) may decide to connect to Node $1$, resulting in a degree of $n-1$ at Node $1$ in the final network topology. 

The variant algorithm demonstrates that connecting to each disconnected component of the Gilbert graph induced by the lesser neighborhood of a node (as in our algorithm) provides sparsity and preserves connectivity. Such a connection strategy is, however, not enough to provide a degree-bounded topology. The connections should be done in an intelligent manner so as not to overwhelm a given node with too many connections. In Algorithm \ref{mainalgo}, this is done through connecting the node with the maximum index in a given component.}\algorithmintuition

As a final remark, we note that Theorem 1 can be applied and extended to networks in higher (or lower) dimensions, i.e. for networks in $\mathbb{R}^d$ for any $d \geq 1$ with the same disk-connectivity model. In fact, let $\mu_d$ denote the maximum number of points that can be packed in the unit ball in $\mathbb{R}^d$ such that any two given distinct points are more than one unit apart. We have $\mu_1 = 2$, $\mu_2 = 5$ (as shown in Lemma 1), and it is not difficult to show that $\mu_d$ is finite for any $d \geq 3$. The exact same algorithm generates a connectivity-preserving topology with at most $\mu_d n$ edges with a maximum node-degree of $2\mu_d$. In fact, similar results can be proved for connectivity models different than the disk model provided that the model admits a similar packing property. 

\section{Average Case Evaluation}
\label{seciii}
Algorithm \ref{mainalgo}, in the ``worst cases,'' results in a topology with $5n$ edges and a maximum node degree of $10$. However, numerical results suggest that for most realizations of node locations, the resulting topology is in fact much sparser and most nodes have a degree less than or equal to $6$. We present an analytical justification of this phenomenon using random graphs.

In this section, we let the node locations $x_1,\ldots,x_n$ be independent and uniformly distributed on $[0,1]^2$ (instead of being arbitrary fixed points in $\mathbb{R}^2$ as has been the case in previous sections). For any given fixed realization of node locations, we may simply run our algorithm to obtain one fixed topology corresponding to the given locations. The random nature of the node locations however means that the resulting topology $(\mathcal{V},\mathcal{A})$ will also be random. We are then interested in the properties of the now-random graph $(\mathcal{V},\mathcal{A})$ (We use the same notation for fixed and random graphs as the difference will be obvious from the context.).

For the random Gilbert graph $(\mathcal{V},g(\mathcal{V}))$, Penrose \cite{penrose} has shown the extremely precise result that if $R^2 = \frac{\log n + \alpha}{\pi n}$, then 
\begin{align}
\label{oqwiepqowiepqw}
\mathrm{Pr}((\mathcal{V},g(\mathcal{V}))\mbox{ is connected}) \rightarrow e^{-e^{-\alpha}} \mbox{ (as $n\rightarrow\infty$).} 
\end{align}
Here, $\mathrm{Pr}(\cdot)$ represents the probability of an event,  $\log(\cdot)$ is the natural logarithm, and $e$ is the base of the natural logarithm. In particular, $\mathrm{Pr}((\mathcal{V},g(\mathcal{V}))\mbox{ is connected}) \rightarrow 1$  if and only if $\alpha\rightarrow\infty$. We consider here random networks with communication radii just asymptotically above the connectivity threshold obtained by Penrose. Our main result is the following theorem.
\begin{theorem}
\label{randtheorem}
Suppose $R^2/(\frac{\log n}{n}) \rightarrow \infty$, and consider the random network $(\mathcal{V},\mathcal{A})$. Then, 
\begin{align}
\forall \epsilon > 0,\,\mathrm{Pr}(|\mathcal{A}| \geq (1+\epsilon)n) \rightarrow 0. 
\end{align}
Moreover, let $d_{\leq 6}$ denote the fraction of vertices in $(\mathcal{V},\mathcal{A})$ with degree no more than $6$. We have
\begin{align}
\forall \epsilon > 0,\,\mathrm{Pr}(d_{\leq 6} \geq 1-\epsilon) \rightarrow 1.
\end{align}
\end{theorem}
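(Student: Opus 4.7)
The plan is to reduce both claims to a single estimate: $\mathbb{E}\,|\{i : J_i \geq 2\}| = o(n)$, where $J_i$ is the number of components of $(\mathcal{N}_i, g(\mathcal{N}_i))$ as in Algorithm \ref{mainalgo}. For Part 1, since $J_i \leq 5$ by Lemma 1 and $J_i \leq 1 + 4 \cdot \mathbf{1}[J_i \geq 2]$, we get the deterministic bound $|\mathcal{A}| = \sum_i J_i \leq n + 4\,|\{i : J_i \geq 2\}|$. For Part 2, I would exploit an observation that is already present in the proof of Theorem \ref{maintheorem}: the \emph{incoming} degree of any node in $(\mathcal{V}, \mathcal{A})$ is deterministically at most $5$, so $\deg(i) \leq J_i + 5$ and in particular $J_i \leq 1 \Rightarrow \deg(i) \leq 6$; hence $n \cdot d_{\leq 6} \geq n - |\{i : J_i \geq 2\}|$. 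Markov's inequality applied to the non-negative integer-valued count $|\{i : J_i \geq 2\}|$ then delivers both probabilistic statements as soon as its expectation is $o(n)$.

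It therefore suffices to show $\sum_{i=1}^n \Pr[J_i \geq 2] = o(n)$, and I would split the sum at an index threshold $i_0 \asymp 1/R^2$. For $i \leq i_0$, the lesser neighborhood is typically empty or a singleton, and using the inclusion $\{J_i \geq 2\} \subset \{|\mathcal{N}_i| \geq 2\}$ together with a second-moment bound on the binomial random variable $|\mathcal{N}_i|$ (with mean of order $i R^2$) gives $\Pr[J_i \geq 2] = O(i^2 R^4)$, whose sum over $i \leq i_0$ is $O(1/R^2) = o(n)$ by the standing hypothesis $R^2 \gg \log n / n$. For $i > i_0$, I would employ a geometric covering argument: cover $B(x_i, R) \cap [0,1]^2$ by a uniformly bounded number $M$ of sub-disks of radius $R/5$ arranged on a fine grid, chosen so that any two points lying in one sub-disk or in two grid-adjacent sub-disks are within distance $R$; if every sub-disk contains some element of $\mathcal{N}_i$, then $(\mathcal{N}_i, g(\mathcal{N}_i))$ is path-connected and $J_i = 1$. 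A union bound over sub-disks then yields $\Pr[J_i \geq 2] \leq M e^{-c R^2 (i-1)}$ for some positive constant $c$, and summing this geometric tail over $i > i_0$ produces another contribution of order $1/R^2 = o(n)$.

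The main obstacle will be handling boundary effects cleanly: when $x_i$ lies within distance $R$ of $\partial [0,1]^2$, some sub-disks in the cover are only partially inside the unit square, shrinking the exponent constant $c$. This is however benign because $B(x_i, R) \cap [0,1]^2$ always has area of order $R^2$ uniformly in $x_i \in [0,1]^2$, so the covering argument still goes through with a uniform but worse constant $c$. A secondary subtlety is the degenerate large-$R$ regime $R = \Omega(1)$ that is technically allowed by the hypothesis; there the Gilbert graph is a complete or nearly complete graph, $J_i \equiv 1$ for $i \geq 2$, and both claims of the theorem hold trivially, so this case can be dispatched at the outset, allowing the main argument to assume $R \to 0$ and all neighborhood disks shrink.
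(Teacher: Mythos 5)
Your proposal is correct in substance but organized quite differently from the paper's proof, so a comparison is in order. The paper fixes a threshold $\beta n$ with $\beta = \frac{24}{\pi}\frac{\log n}{nR^2}\to 0$, partitions the disk $B(x_i,R)$ of each node $i>\beta n$ into twelve $30^{\circ}$ sectors, and shows via a union bound that with probability $\to 1$ \emph{every} such node has \emph{every} sector occupied by a node of index below $\beta n$; on that single good event each high-index node has a connected lesser neighborhood (so it initiates exactly one edge) and, combined with the in-degree bound of $5$ from Theorem \ref{maintheorem}, degree at most $6$, giving at most $5\beta n+(1-\beta)n$ edges and a $(1-\beta)$ fraction of low-degree nodes deterministically. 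You instead bound $\mathbb{E}\,|\{i: J_i\ge 2\}|$ node by node and invoke Markov, splitting at $i_0\asymp 1/R^2$, with a pair-counting bound for small $i$ and a covering argument (your sub-disks playing the role of the paper's sectors) for large $i$; the deterministic ingredients you use ($|\mathcal{A}|=\sum_i J_i$, $J_i\le 5$, in-degree $\le 5$, hence $J_i\le 1\Rightarrow\deg(i)\le 6$) are exactly the ones the paper uses, and your reduction of both claims to $\sum_i\Pr[J_i\ge 2]=o(n)$ is valid. Your route is arguably cleaner probabilistically (first moment only, no need to choose $\beta$) and yields the slightly stronger quantitative fact that the expected number of multi-component nodes is $O(1/R^2)=o(n/\log n)$; the paper's route yields the stronger event-level statement that, with high probability, \emph{all} nodes above the threshold simultaneously behave well.

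Two loose points in your sketch deserve attention, though neither is fatal. First, for the exponential bound $\Pr[J_i\ge 2]\le M e^{-cR^2(i-1)}$ what you need is a \emph{per-piece} area lower bound $\mathrm{area}(D\cap[0,1]^2)\ge cR^2$ for every covering piece you require to be occupied, not merely that the total area of $B(x_i,R)\cap[0,1]^2$ is of order $R^2$; a piece that barely clips the square or the disk can have negligible area. The clean fix is to take cells of a grid aligned with and contained in $[0,1]^2$ of side $\asymp R$ (small enough that same-or-adjacent cells are within distance $R$), require occupancy only of cells fully contained in $B(x_i,R)$, and check that every lesser neighbor is within $R$ of such a cell and that these cells are grid-connected (the region is convex). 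Note the paper itself simply says it ignores edge effects, so you are not being held to a higher standard here, but your stated justification points at the wrong quantity. Second, your dispatch of the regime $R=\Omega(1)$ via ``the Gilbert graph is complete, so $J_i\equiv 1$'' is not accurate unless $R\ge\sqrt 2$; however, this case needs no separate treatment at all, since for bounded-below $R$ your two bounds already give $\sum_i\Pr[J_i\ge 2]=O(1/R^2)=O(1)=o(n)$.
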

\begin{proof}
For any given Node $i$ with index $i>\beta n$, where $0 < \beta < 1$, let $\mathcal{E}_{ij}$ be the event that the $j$th Circular Sector $S_{i,j}$ of Node $i$ does not contain a node with index less than $\beta n$ (See Fig. 4 for the definition and the illustration of the circular sectors of a given node.). Ignoring the edge effects (which can be shown to not change the final results), we have
\begin{align}
\mathrm{Pr}(\mathcal{E}_{ij}) = \left(1-\frac{\pi R^2}{12}\right)^{\beta n},\,\forall j\in\{1,\ldots,12\},\,\forall i > \beta n.
\end{align}
Now, let $\mathcal{E}_i$ be the event that Node $i$ (with, again, $i > \beta n$) has a circular sector that does not contain a node with index less than $\beta n$.  By a union bound, we have 
\begin{align}
\mathrm{Pr}(\mathcal{E}_{i}) \leq 12\left(1-\frac{\pi R^2}{12}\right)^{\beta n},\,\forall i > \beta n.
\end{align}
Consider now the connections initiated by Node $i$ when it runs the algorithm. Given the complement of event $\mathcal{E}_i$, for any $j\in\{1,\ldots,12\}$, Sector $S_{i,j}$ of Node $i$ contains at least one node, say Node $s_{i,j}$, with $s_{i,j} < \beta n$. Now, note that any two given nodes in any sector are clearly neighbors. Moreover, for any given $j\in\{1,\ldots,12\}$, Node $s_{i,j}$ is necessarily a neighbor of the nodes of its neighboring sectors. In particular, Node $s_{i,1}$ is a neighbor of Nodes $s_{i,2}$ and $s_{i,12}$, Node $s_{i,2}$ is a neighbor of Nodes $s_{i,1}$ and $s_{i,3}$, and so on. These imply that the Gilbert graph induced by the lower neighborhood of Node $i$ has only one connected component, so that Node $i$ initiates only one direct connection provided that $i > \beta n$ and event $\mathcal{E}_i$ does not occur. On the other hand, since at most $5$ other nodes can initiate a direct connection to Node $i$ (this was proved as part of the proof of Theorem 1), the degree of Node $i$ will be at most $6$ in all of the final topologies where $\mathcal{E}_i$ does not hold.

\begin{figure}[h]
\center
\scalebox{1.1}{\includegraphics{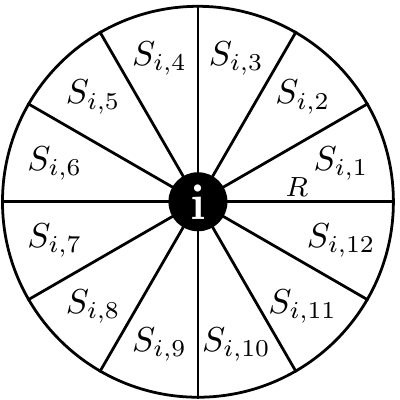}}
\caption{The circular sectors around the location $x_i$ of Node $i$ that are used in the proof of Theorem 2. Each sector includes its boundary. The central angle of each sector is equal to $30^{\circ}$. }
\end{figure}

If, further, event $\bigcup_{i=\beta n}^n \mathcal{E}_i$ does not occur, the network will have at most $5\beta n + (n-n\beta)$ edges (at most $5$ direct connections are initiated by nodes with indices less than $\beta n$, and only $1$ direct connection is initiated by nodes with indices greater than $\beta n$), with $(1-\beta)$-fraction of its nodes having degree no more than $6$. To prove the theorem, it is thus sufficient to show that $\mathrm{Pr}(\bigcup_{i=\beta n}^n \mathcal{E}_i) \rightarrow 0 $ with a suitable choice of $\beta$ such that $\beta \rightarrow 0$. In fact, using a union bound, and letting $\beta = \frac{24}{\pi} \frac{\log n}{nR^2}$, we have $\beta \rightarrow 0$, and 
\begin{multline}
\mathrm{Pr}\left(\bigcup_{i=\beta n}^n \mathcal{E}_i\right) \leq 12 n \left(1-\frac{\pi R^2}{12}\right)^{\beta n}  \\  \sim 12 n \exp\left(- \frac{\pi R^2 \beta n}{12} \right)   \sim \frac{12}{n} \rightarrow 0,
\end{multline}
where $\sim$ represents asymptotic equivalence. For the first equivalence, we have also assumed $R^2 \rightarrow 0$ without loss of generality (A network with a larger $R$ cannot have more edges or larger node degrees.). This concludes the proof.
\end{proof}
Hence, on average, Algorithm \ref{mainalgo} provides an extremely sparse connected network with $n(1+o(1))$ edges with a degree less than or equal to $6$ for $1-o(1)$ fraction of the nodes.

\section{Algorithm Implementation}
\label{secpracticalities}
In this section, we discuss the issues related to the implementation of Algorithm \ref{mainalgo}. We begin by describing a protocol for the implementation of the algorithm.
\subsection{A Protocol for Algorithm Implementation}
\label{secprot1}
\newcommand{\protone}{For the algorithm to work correctly, a given Node $i$ in the network only has to know its lesser neighbors $\mathcal{N}_i$ and the lesser neighbors $\mathcal{N}_j$ of each one of its lesser neighbors $j\in\mathcal{N}_i$. We weaken this statement by saying that each node only has to know its neighbors and the neighbors of its neighbors, i.e., its one- and two-hop neighbors. One way to implement the algorithm may then be via the following protocol that incorporates three rounds of inter-node communications: In the first round, each node may broadcast a ``Hello'' message (together with its index information) so that each node acquires the knowledge of its neighboring nodes. In the second round of communications, each node broadcasts the indices of its neighbors so that each node can also acquire the indices of each one of its neighboring nodes. Each node may then run the algorithm to determine the set of nodes to connect to; this step does not require any inter-node communication. In the final and third round of communications, each node broadcasts the indices of the nodes it has decided to connect to. Once the corresponding connections are established, the final topology is complete.}\protone
\subsection{Communication Complexity}
\label{secprot2}
\newcommand{\prottwo}{
The communication overhead of the above protocol can also be analyzed in an average sense. Suppose the $n$ nodes are distributed uniformly at random on $[0,1]^2$, as in Section \ref{seciii}. The index of each node can be represented via a binary word of length $O(\log n)$ bits. In the first round of communications, the message of each node is thus $O(\log n)$ bits, for a total of $O(n\log n)$ bits over the entire network. Given that the nodes are distributed on $[0,1]^2$, each node has $O(nR^2)$ neighbors on average, resulting in a per-node message length of $O(n \log n R^2)$ bits on average for the second round of communications. Finally, as each node initiates at most $5$ connections by Lemma 1, the per-node message length is $O(\log n)$ bits for the third round of communications. Each node thus sends a total of $O(\log n (1+nR^2))$ bits in total during the topology formation phase. In particular, setting $R^2 = \frac{\log n + \alpha}{\pi n}$ for some $\alpha \rightarrow \infty$ and $\alpha \in o(\log n)$, the network is asymptotically almost surely connected according to (\ref{oqwiepqowiepqw}), and the per-node message length to establish the topology is $O(\log^2 n)$ bits. Hence, the average per-node communication complexity for the establishment of the network topology is only polylogarithmic in the number of nodes.}\prottwo
\subsection{Node Identification}
\label{secnodeid}
We also note that in practice, a node may not carry any ``index information,'' at least not necessarily in the form of a natural number ranging from $1$ to $n$. Instead, each node may have a unique identification number (or a unique address) that can be used for indexing purposes. These identification numbers can be ordered, for example, lexicographically. Instead of the natural numbers with their standard order, the same algorithm can then operate over the node identification numbers with their lexicographical order. Hence, the (likely) possibility of ``unnatural'' node indices does not affect the way the algorithm operates or the final results. \newcommand{\nodeidcomment}{In general, we assume that each node is assigned its unique identification number during manufacturing, in a manner similar to the assignment of media access control (MAC) addresses. Hence, a separate  algorithm for node identification number assignment is not necessary.}\nodeidcomment

\section{The Unachievability of Constant Stretch Factors Using Neighborhood  Information}
\label{seciv}
We have shown the existence of a local topology control algorithm that can preserve connectivity with constant bounded maximum node degree using only one- and two-hop neighborhood  information. All the previous algorithms with the same promises in addition require geographical information (in the form of neighbors' distance/direction). As we have mentioned in Section \ref{secib}, some of these algorithms also guarantee constant $\alpha$-stretch factors, which provide a stronger notion of connectivity. Unfortunately, in the case of our algorithm, for any $\alpha \geq 0$, one can construct a specific realization of node locations such that the $\alpha$-stretch factor of the resulting topology can be made arbitrarily large. However, the stretch factors may be low with high probability, as we will show numerically in the next section. This begs the question of whether or not there exists another (better) algorithm that similarly uses only neighborhood  information and can provide constant stretch factors with bounded node degrees. In this section, we answer this question in the negative: There are no such algorithms even if one assumes global knowledge of neighborhood  information.

Let us first define the $\alpha$-stretch factors in a formal manner. Let us fix some $\alpha \geq 0$, and assign the weight $|x_i-x_j|^{\alpha}$ to every $(i,j)\in g(\mathcal{V})$. We can then model the cost of communicating from Node $i$ to Node $j$ on a given  spanning subgraph $(\mathcal{V},\mathcal{E})$ of $(\mathcal{V},g(\mathcal{V}))$ via the quantity 
\begin{align}
c_{\alpha}(i,j;\mathcal{E}) \triangleq \min_{\mathbf{p}} \sum_{i=1}^{|\mathbf{p}|-1} |x_{p_i} - x_{p_{i+1}}|^{\alpha},
\end{align}
where the minimization is over all paths in $(\mathcal{V},\mathcal{E})$ connecting Nodes $i$ and $j$. We let $c_{\alpha}(i,j;\mathcal{V},\mathcal{E}) = \infty$ if Nodes $i$ and $j$ are not path-connected in $(\mathcal{V},\mathcal{E})$. We say that the topology $(\mathcal{V},\mathcal{E})$ has an $\alpha$-stretch factor of $t$ if $c_{\alpha}(i,j;\mathcal{E}) \leq t c_{\alpha}(i,j;g(\mathcal{V}))$ for every $i,j\in\mathcal{V}$ with $i\neq j$. 

Let $\mathfrak{E}$ denote the collection of all possible sets of edges given the vertex set $\mathcal{V}$. A centralized (and deterministic) control of topology can then be modeled as a mapping $f:\mathfrak{E}\rightarrow\mathfrak{E}$ with the property that for every $\mathcal{E}\in\mathfrak{E}$, we have $ f(\mathcal{E}) \subset \mathcal{E}$. Operationally, we assume that a control center can somehow have access to (only) the entire neighborhood information $g(\mathcal{V})$, and declares $(\mathcal{V},f(g(\mathcal{V}))$ as the final topology. We have the following result regarding the hop-stretch factors.
\begin{theorem}
Suppose that $f$ preserves connectivity and results in constant maximum node degree of $d > 0$. Then, $f$ cannot provide a constant hop-stretch factor.
\end{theorem}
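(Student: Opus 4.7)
The plan is to construct, for each $n$, a realization of node locations that forces $f$'s input to be the complete graph $K_n$, and then argue via a Moore-type counting bound that the output cannot simultaneously be connected, have max degree $d$, and remain within a constant hop-stretch of $K_n$.

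First I would place all $n$ nodes inside a disk of radius $R/2$. Then $|x_i-x_j|\leq R$ for every pair $i,j$, so the Gilbert graph satisfies $g(\mathcal{V})=\{(i,j):i<j\}$, i.e., it equals the edge set of $K_n$. In particular, $c_0(i,j;g(\mathcal{V}))=1$ for all distinct $i,j\in\mathcal{V}$, since every two nodes are directly connected in the Gilbert graph. This realization is the adversarial input: the mapping $f$ receives the complete edge set and must output some subset $f(g(\mathcal{V}))$ that, by the hypotheses of the theorem, induces a connected graph on $\mathcal{V}$ with maximum degree at most $d$.

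Second, I would invoke the standard Moore bound for bounded-degree connected graphs: in any connected graph on $n$ vertices with maximum degree at most $d$, the number of vertices within hop-distance $D$ of a fixed vertex is at most $1+d+d(d-1)+\cdots+d(d-1)^{D-1}$, which is $O((d-1)^D)$. Setting this at least $n$ forces $D\geq \log_{d-1}n - O(1)$, so the hop-diameter of $(\mathcal{V},f(g(\mathcal{V})))$ is at least $\log_{d-1}n-O(1)$. Therefore there exist nodes $i,j\in\mathcal{V}$ with $c_0(i,j;f(g(\mathcal{V})))\geq \log_{d-1}n - O(1)$, while $c_0(i,j;g(\mathcal{V}))=1$. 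The hop-stretch factor of the produced topology on this instance is then at least $\log_{d-1}n-O(1)$.

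Third, since the above holds for every $n$, the hop-stretch factor of $f$ is unbounded as $n$ grows, contradicting the assumption that $f$ admits a constant hop-stretch factor. The only nontrivial ingredient is the Moore/BFS-layer bound, which is entirely routine; the rest is a one-line geometric construction and a direct comparison against the Gilbert graph's unit hop-diameter. I do not anticipate a real obstacle, since $f$ is defined purely on edge sets and is therefore forced to behave badly on the one input we hand it.
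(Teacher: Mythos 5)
Your proposal is correct and follows essentially the same route as the paper: place all nodes within mutual range so that $f$ receives $K_n$, then use a ball-growth (Moore-type) counting bound on a connected graph of maximum degree $d$ to exhibit a pair at hop distance growing with $n$ while at hop distance $1$ in the Gilbert graph. The only cosmetic difference is that you quote the sharper Moore bound with $(d-1)$-ary growth, whereas the paper counts $1+d+\cdots+d^{k-1}$; both yield the same unboundedness conclusion.
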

\begin{proof}
Suppose all the nodes are located within each other's communication range. Then, $g(\mathcal{V})$ is the complete graph $K_n$ of $n$ nodes where all the nodes are within $1$ hop of each other. Consider now the topology $f(K_n)$ generated by $f$. There are $n-1$ nodes that are at least $1$ hop away from Node $1$. Moreover, due to the degree bound provided by $f$, there are at least $n-d-1$ nodes that are at least $2$ hops away from Node $1$, and in general, at least $n-(1+d+\cdots+d^{k-1})$ nodes that are at least $k$ hops away from Node $1$. Hence, for any $k$ and $d$, if $n > 1+d+\cdots+d^{k-1}$, there exists a node with index $i\in\mathcal{V}$ such that Node $i$ is $k$ hops away from Node $1$ in $(\mathcal{V},f(K_n))$. On the other hand, Nodes $i$ and $1$ were only $1$ hop apart in the graph $(\mathcal{V},K_n)$. The hop-stretch factor with $f$ is thus at least $k$. The result now follows immediately as for any $d$, $k$ can be made arbitrarily large by considering a sufficiently large $n$. 
\end{proof}
A stronger version can be proved in the case of the $\alpha$-stretch factors for $\alpha > 0$.
\begin{theorem}
Suppose that $f$ preserves connectivity but with one or more edges missing from the Gilbert graph. Then, for any $\alpha > 0$, the mapping $f$ cannot provide a bounded $\alpha$-stretch factor.
\end{theorem}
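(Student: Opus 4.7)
The plan is to exploit that $f$ depends only on the combinatorial Gilbert graph, so its output is identical across all realizations producing the same adjacency structure. By hypothesis, there is a realization $\{x^{(0)}_k\}$ whose Gilbert graph $G_0$ satisfies $f(G_0) \subsetneq G_0$; fix an edge $(i,j) \in G_0 \setminus f(G_0)$. Since $(\mathcal{V},f(G_0))$ is connected and the direct edge $(i,j)$ is absent, the set $N$ of neighbors of $i$ in $f(G_0)$ is nonempty and does not contain $j$. For any realization $\{x_k\}$ with $g(\{x_k\})=G_0$, every $i$-to-$j$ path in $f(G_0)$ begins with some edge $(i,k)$, $k\in N$, so
\[
c_\alpha(i,j;f(G_0)) \;\geq\; \min_{k\in N}|x_i-x_k|^\alpha,
\]
while $c_\alpha(i,j;G_0)\leq|x_i-x_j|^\alpha$ by taking the direct edge. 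Hence the $\alpha$-stretch factor attained under this realization is at least $\min_{k\in N}|x_i-x_k|^\alpha / |x_i-x_j|^\alpha$.

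My plan is to exhibit a family of realizations $\{x^{(\tau)}_k\}_{\tau>0}$ with $g(\{x^{(\tau)}_k\})=G_0$ for every $\tau$, such that $|x^{(\tau)}_i-x^{(\tau)}_j|\to 0$ as $\tau\to 0$ while $\min_{k\in N}|x^{(\tau)}_i-x^{(\tau)}_k|$ stays bounded below by, say, $R/2$. Any such family makes the lower bound on the stretch factor diverge, contradicting any putative finite bound. To construct the family I would continuously slide $x_i$ toward $x_j$ along the segment between them. The realization space $\mathcal{R}_{G_0}$ is an open subset of $\mathbb{R}^{2n}$ defined by finitely many strict inequalities (after an infinitesimal generic perturbation ensuring no pairwise distance equals $R$), so sufficiently small perturbations preserve the Gilbert graph.

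The hard part is that this naive sliding eventually violates constraints involving third-party nodes $w$ that are neighbors of exactly one of $\{i,j\}$ in $G_0$: as $x_i$ approaches $x_j$, any such $w$ either loses an edge it should keep or gains one it should not. To handle this, I would simultaneously drag each such $w$ inside the appropriate crescent-shaped region (inside the closed $R$-disk around one endpoint and outside the open $R$-disk around the other); these crescents remain nonempty for every positive value of $|x_i-x_j|$, so the deformation can be continued arbitrarily far toward the limit. The nodes in $N$ may be placed near the boundary of the $R$-disk around $x_i$ on the side opposite from $x_j$, keeping $|x_i-x_k|\geq R/2$ throughout. The remaining obstacle is verifying that these coordinated motions of the third-party nodes do not create new constraint violations among themselves; this amounts to showing that the open stratum $\mathcal{R}_{G_0}$ accumulates in its closure at the degenerate limit $\{x_i=x_j\}$, which can be established by an explicit one-parameter construction. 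Once the family is in hand, the ratio blows up as $\tau \to 0$ and the theorem follows.
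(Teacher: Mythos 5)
Your central idea is the same one the paper uses: $f$ sees only the combinatorial Gilbert graph, so its output is identical on every realization with the same adjacency structure, and one should therefore pick the geometry adversarially so that the missing direct edge $(i,j)$ is short while any surviving detour is long. Your first-edge bound $c_\alpha(i,j;f(G_0)) \geq \min_{k\in N}|x_i-x_k|^\alpha$ is also fine. The genuine gap is the step you flag and then defer: you must prove that the realization space of the \emph{given, arbitrary} graph $G_0$ contains configurations in which $|x_i-x_j|$ is arbitrarily small while every node of $N$ stays at distance at least $R/2$ from $x_i$ (and all other adjacencies and non-adjacencies of $G_0$ are preserved). Nothing in the sketch establishes this, and it is not a routine perturbation fact: the realization space is cut out by upper \emph{and} lower distance constraints, openness only licenses small perturbations, and a unit-disk graph can geometrically pin adjacent vertices together in every realization --- for instance, if $i$ and some $k\in N$ are forced by $G_0$ to have several pairwise non-adjacent common neighbors, those neighbors must be pairwise more than $R$ apart inside the lens $B(x_i,R)\cap B(x_k,R)$, which forces $|x_i-x_k|$ to be small in every realization, defeating the requirement $\min_{k\in N}|x_i-x_k|\geq R/2$. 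Since the hypothesis hands you $G_0$ (you do not get to choose it), the ``explicit one-parameter construction'' you invoke is precisely the missing proof, and it may not exist for the $G_0$ you are given.

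The paper sidesteps all of this by fixing the instance before deforming it: it places all $n$ nodes inside a disk of radius $\epsilon R$, so the Gilbert graph is the complete graph $K_n$ (the hypothesis is read as supplying a missing edge $(i,j)$ of $f(K_n)$). Because the graph is complete, there is enormous slack in the constraints: a node on the cheapest $i$--$j$ detour can be moved out to distance roughly $R$ from the cluster --- equivalently, one can keep $x_i,x_j$ within $\epsilon$ of each other and place every other node at distance just under $R$ from both --- without changing the Gilbert graph, so the realizability question you struggle with never arises. Then every path from $i$ to $j$ in $f(K_n)$ costs at least on the order of $R^\alpha(1-\epsilon)^\alpha$, the direct edge in $g(\mathcal{V})$ costs at most $(2\epsilon R)^\alpha$, and letting $\epsilon\to 0$ destroys any finite $\alpha$-stretch bound. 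If you want to repair your write-up, the cleanest fix is to adopt that move: instantiate the argument on the tiny-cluster/complete-graph configuration, where the ``same Gilbert graph'' constraint is trivially maintained under large relocations, rather than attempting a global deformation of an arbitrary unit-disk realization.
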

\begin{proof}
Suppose all the nodes are located on a disk of radius $\epsilon R$ centered at the origin for some $\epsilon > 0$. We have $g(\mathcal{V}) = K_n$. Let $(i,j)$ denote one of the missing edges in $(\mathcal{V},f(K_n))$. Now, if $(i,j)$ is not path-connected in $(\mathcal{V},f(K_n))$, we have $c_{\alpha}(i,j;f(K_n)) = \infty$. The claim of the theorem then follows immediately as obviously, $c_{\alpha}(i,j;g(\mathcal{V})) \leq (2\epsilon)^{\alpha}$ and thus $c_{\alpha}(i,j;g(\mathcal{V}))$ is finite. Otherwise, let $P$ denote the set of all paths in $(\mathcal{V},f(K_n))$ that connect Node $i$ to Node $j$. Choose a node $k\notin\{i,j\}$ in the path with the minimum cost in $P$. Consider a change of locations where we move Node $k$ to $[R\,\,0]$ while keeping all the other nodes' locations fixed. We again have $g(\mathcal{V}) = K_n$ in this case, and the resulting topology is thus the same topology $(\mathcal{V},f(K_n))$ when all the nodes were located at the origin. However, we now have (for the new node locations) $c_{\alpha}(i,j;f(K_n)) > |x_i-x_k|^{\alpha}+|x_k-x_j|^{\alpha} = 2R^{\alpha}(1-\epsilon)^{\alpha}$ while $c_{\alpha}(i,j;g(\mathcal{V})) \leq (2\epsilon)^{\alpha}$. The $\alpha$-stretch factor provided by $f$ is then at least $\frac{2R^{\alpha}(1-\epsilon)^{\alpha}}{(2\epsilon)^{\alpha}}$, which can be made arbitrarily large by choosing a sufficiently small $\epsilon$. This concludes the proof.
\end{proof}

\section{Constructing Graphs with Degree Lower Bounds}\label{secdeglb}
In practice, it is also important to construct a robust graph, by e.g. providing a lower bound on the node degrees as well as an upper bound. This way, if a certain subset of communication links is broken, one can potentially use the remaining links to reach one node from another. 

Suppose that the degree of each node in a connected Gilbert graph $(\mathcal{V},g(\mathcal{V})$ is at least $\delta \geq 1$. In the following, we provide an asynchronous algorithm that preserves connectivity, and provides a lower bound of $\delta$ and a constant upper bound on the degree of every node. In this context, one naive algorithm that comes to mind is for each node to randomly add edges to guarantee a degree lower bound of $\delta$ after running Algorithm \ref{mainalgo}. Unfortunately, this approach does not lead to a constant upper bound on the degree of the networking nodes.

Let  $\overline{\mathcal{N}}_i \triangleq \{j:j\in\mathcal{V},\,j>i,\,|x_i - x_j| \leq R\}$ denote the greater neighborhood of Node $i$. We then consider the algorithm whose steps are provided in Algorithm \ref{newalgodeglb}. First, as shown by Line 1 of the algorithm, Node $i$ chooses the same nodes to connect to as in Algorithm \ref{mainalgo}. Next, in Lines 2 and 3, the remaining lesser neighbors are added to the list $\mathcal{C}_i$ of the nodes that Node $i$ will connect to, until the eventual degree of the node, $|\mathcal{C}_i|$, is guaranteed to be at least $\delta$. The priority is given to the lesser neighbors with the highest index.  Often, the set of lesser neighbors are not enough to satisfy the degree lower bound, in which case we add the greater neighbors of Node $i$ via Lines 4 and 5. This time, the neighbors with the smallest index are given priority.

\begin{algorithm}
\caption{Algorithm for Minimum Degree Guarantee (at Node $i$)}
\begin{algorithmic}[1]
\State $\mathcal{C}_i \leftarrow \{\max \mathcal{N}_{ij},\,1\leq j \leq J_i\}$. 
\While{$|\mathcal{C}_i| < \delta$ and $\mathcal{C}_i \neq \mathcal{N}_i$}
\State $\mathcal{C}_i \leftarrow \mathcal{C}_i \cup \{\max (\mathcal{N}_{i}-\mathcal{C}_i)\}$. 
\EndWhile
\While{$|\mathcal{C}_i| < \delta$ and $\mathcal{C}_i \neq \mathcal{N}_i \cup \overline{\mathcal{N}}_i$ }
\State $\mathcal{C}_i \leftarrow \mathcal{C}_i \cup  \{\min (\overline{\mathcal{N}}_{i}-\mathcal{C}_i)\}$. 
\EndWhile
\State Connect to all nodes in the set $\mathcal{C}_i$.
\end{algorithmic}
\label{newalgodeglb}
\end{algorithm}

In particular, for $\delta = 1$, Algorithm \ref{newalgodeglb} is identical to Algorithm 1. The following theorem provides the properties of the topologies generated by Algorithm \ref{newalgodeglb} for a general $\delta \geq 1$.
\begin{theorem}\label{deglbtheorem}
Suppose that the $n$-node graph $(\mathcal{V},g(\mathcal{V}))$ is connected with a degree of $d_i$ at Node $i$, where $i\in\{1,\ldots,n\}$. Then, the topology generated by Algorithm \ref{newalgodeglb} is connected with at most $\max\{\delta,5\}n$ edges. Moreover, for any given $i\in\{1,\ldots,n\}$, the degree of Node $i$ is at least $\min\{d_i,\delta\}$ but no more than $\max\{\delta,5\}+10\delta$.
\end{theorem}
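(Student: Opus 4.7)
My plan is to prove the four claims of Theorem \ref{deglbtheorem}---connectivity, edge count, degree lower bound, and degree upper bound---in turn, reusing the machinery already established for Algorithm \ref{mainalgo} wherever possible. The first claim is essentially free: Line 1 of Algorithm \ref{newalgodeglb} is literally the single step of Algorithm \ref{mainalgo}, so the edge set produced by Algorithm \ref{newalgodeglb} always contains the one produced by Algorithm \ref{mainalgo}. Since Theorem \ref{maintheorem} already establishes connectivity for the latter, and adding further edges never disconnects a graph, connectivity of $(\mathcal{V},\mathcal{A})$ is immediate.

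For the edge count and the degree lower bound I will control $|\mathcal{C}_i|$ pointwise. Line 1 deposits $J_i \leq 5$ elements by Lemma 1, and the two while loops each add elements one at a time, stopping as soon as $|\mathcal{C}_i| \geq \delta$ or the available pool is exhausted. Hence $|\mathcal{C}_i| \leq \max\{J_i,\delta\} \leq \max\{\delta,5\}$, and summing over $i$ gives $|\mathcal{A}| \leq \max\{\delta,5\}n$. In the reverse direction, the loops terminate early only when $\mathcal{N}_i \cup \overline{\mathcal{N}}_i \subseteq \mathcal{C}_i$, so $|\mathcal{C}_i| \geq \min\{d_i,\delta\}$; the edges node $i$ initiates on its own therefore already give it degree at least $\min\{d_i,\delta\}$.

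The crux of the proof is the degree upper bound. The degree of node $i$ is bounded by $|\mathcal{C}_i|$ plus $|\{j \neq i : i \in \mathcal{C}_j\}|$, and the first term is at most $\max\{\delta,5\}$, so it suffices to produce an $O(\delta)$ bound on the second. I will partition the plane around $x_i$ into six $60^{\circ}$ sectors; by the same packing fact used in Lemma 1, any two neighbors of $i$ lying in the same sector are within range of each other, so each sector together with $i$ induces a clique in $g(\mathcal{V})$. Within a fixed sector $S$, consider first the nodes $j > i$ in $S$ that insert $i$ into $\mathcal{C}_j$: because Lines 2--3 add lesser neighbors in decreasing index order (and Line 1 adds only the maximum of each connected component), every such $j$ must have already absorbed into $\mathcal{C}_j$ each $j' \in S$ with $i < j' < j$, forcing the number of such $j$ in $S$ to be at most $\delta$. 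A symmetric argument using the increasing-index order of Lines 4--5 bounds the contributing $j<i$ in $S$ by $\delta$. Summing across the six sectors yields an $O(\delta)$ bound on the second term, and a slightly finer accounting---absorbing the at-most-five global Line-1 contributors from Lemma 1 into the per-sector Lines-2--3 counts---tightens the constant to produce the stated $\max\{\delta,5\} + 10\delta$.

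I expect the main obstacle to lie in this last tightening of constants. The bulk of the argument falls out of the sector decomposition and the strict ordering of the while loops, but care is needed to ensure that the at-most-one Line-1 contributor per sector is not double-counted with the Lines-2--3 count, and that the degenerate case in which a node's neighbor pool is exhausted before its while loop reaches $\delta$ is handled properly; neither subtlety is present in the simpler setting of Theorem \ref{maintheorem}.
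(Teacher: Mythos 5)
Your treatment of connectivity, the edge count, and the degree lower bound coincides with the paper's and is fine: Line 1 reproduces the edge set of Algorithm \ref{mainalgo} (so Theorem \ref{maintheorem} gives connectivity), the while loops keep $|\mathcal{C}_i|\leq\max\{J_i,\delta\}\leq\max\{\delta,5\}$, and they terminate early only when the neighbor pool is exhausted, which yields the $\min\{d_i,\delta\}$ lower bound. Your key mechanism for the upper bound --- within a set of mutually-in-range nodes that all connect to Node $i$, the priority order of Lines 2--3 (resp.\ 4--5) forces the highest-indexed (resp.\ lowest-indexed) of them to have absorbed all the others into its list before it can ever reach Node $i$, so such a set has at most $\delta$ members --- is exactly the contradiction argument the paper uses.

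However, the way you deploy it leaves a genuine gap in the constant. Covering the neighborhood of Node $i$ by six $60^{\circ}$ sectors bounds the incoming connections by $6\delta$ from greater-indexed nodes plus $6\delta$ from lesser-indexed nodes, i.e.\ it proves $\max\{\delta,5\}+12\delta$, not the stated $\max\{\delta,5\}+10\delta$. Your proposed repair --- absorbing the at-most-five Line-1 contributors --- cannot close this: those contributors are already counted inside the per-sector $\delta$'s and number $O(1)$, while the shortfall is $2\delta$; and you cannot shrink the cover to five sectors, since a $72^{\circ}$ sector of radius $R$ has diameter $2R\sin 36^{\circ}>R$ and is no longer a clique. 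The paper reaches $10\delta$ by bounding the greater-indexed connectors (and, symmetrically, the lesser-indexed ones) by $5\delta$ directly: if more than $5\delta$ of them connected to Node $i$, one could extract $\delta+1$ of them that are mutually within range, and the same ordering argument applied to that clique gives the contradiction. The constant $5$ thus comes from the packing bound of Lemma 1 (at most five neighbors of $i$ can be pairwise out of range), not from a sector partition. If you want to make that counting step airtight, note that each connector can have at most $\delta-1$ in-range connectors of higher priority than Node $i$ and smaller index than itself, so the unit-disk graph on the connectors is $(\delta-1)$-degenerate; more than $5\delta$ connectors would then force an independent set of size six among neighbors of $i$, contradicting Lemma 1. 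Without an argument of this type your proof only establishes the weaker bound $\max\{\delta,5\}+12\delta$.
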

\begin{proof}
Algorithm \ref{newalgodeglb} contains all edges that are provided by Algorithm \ref{mainalgo} due to Line 1. Therefore, since Algorithm \ref{mainalgo} provides a connected topology, so does Algorithm \ref{newalgodeglb}. The proof of Theorem 1 shows that Line 1 results in a set $\mathcal{C}_i$ with cardinality at most $5$, while Lines 3 and 5 add new nodes to $\mathcal{C}_i$ (one at a time) until $|\mathcal{C}_i| \geq \min\{d_i,\delta\}$. Therefore, $|\mathcal{C}_i| \leq \max\{\delta,5\}$. This implies that there are at most $\max\{\delta,5\}n$ edges in the final topology. We now provide the degree bounds. The fact that the final topology provides a degree lower bound of $\min\{d_i,\delta\}$ at every node is obvious by the design of the algorithm. We thus prove the degree upper bound. First, note that a given node initiates at most $\max\{\delta,5\}$ connections. We now determine the maximum number of nodes that establish a connection to Node $i$. We argue that there can be at most $5\delta$ greater neighbors of Node $i$ that connect to Node $i$. Assume the contrary. One can then find $\delta+1$ greater neighbors of Node $i$ with indices $j_1,\ldots,j_{\delta+1}$ that are all mutually within communication range and connect to Node $i$. Without loss of generality, suppose $i < j_1 < \cdots <j_{\delta+1}$. Since Node $j_{\delta+1}$ connects to Node $i$, it also necessarily connects to Nodes $j_1,\ldots,j_{\delta}$. There are now two possibilities when the algorithm is run on Node $j_{\delta+1}$. The first possibility is that the connection to Node $j_{\delta}$ is made via Line 1, and the remaining $\delta$ connections to Nodes $i,j_1,\ldots,j_{\delta-1}$ are made via Line 3 of the algorithm. This leads to a contradiction as Line 3 cannot increase the node degree beyond $\delta$. The second possibility is that all $\delta+1$ connections to Nodes $i,j_1,\ldots,j_\delta$ are made via Line 3 of the algorithm, which leads to a similar contradiction. A similar argument shows that there can be at most $5\delta$ lesser neighbors of Node $i$ that connect to Node $i$. As a result, the degree of each node in the final topology is no more than $\max\{\delta,5\}+10\delta$.
\end{proof}
In particular, if the degree of every node in the Gilbert graph is at least $\delta$, then Algorithm \ref{newalgodeglb} provides a degree lower bound of $\delta$ at every node, as desired. An interesting direction for further research is to find a better algorithm that improves the degree upper bound in Theorem \ref{deglbtheorem}, as for example, for $\delta=1$, Theorem 1 provides a better degree upper bound.

\section{Numerical Results}
\label{secv}
In this section, we present numerical simulations that confirm our analytical results and provide additional insights. We have run  Algorithm \ref{mainalgo} on a network with $n=1000$ nodes and initially no connections. Nodes are located independently and uniformly on $[0,1]^2$. We have considered the choices $\pi R^2 = \frac{N}{n}$ for $N\in\{10,20,30\}$. The parameter $N$ can be thought as a quantification of ``node density'' as any given node of the network then has roughly (ignoring the edge effects) $N$ neighbors on average. Regarding our specific choices for $N$, we note that the probabilities of connectivity for the associated Gilbert graphs are approximately $0.5654$, $0.9922$, and $0.9996$ for the choices $N=10$, $N=20$, and $N=30$, respectively. These values were obtained numerically by averaging over at least $5000$ node location realizations. Hence, the three different choices for $N$ represent the three different scenarios of ``mostly-disconnected,'' ``usually connected,'' and ``almost-always connected'' networks.

In Fig. \ref{edgesimfig}, we show the cumulative distribution functions (CDFs) of the normalized number of edges $\frac{|\mathcal{A}|}{n}$ for different values of $N$. In all the node location realizations we have simulated, the number of edges of the network never exceeded $1.14n$ or went below $1.02n$ for any choice of $N$. These observations are in agreement with the inequality $|\mathcal{A}|\leq 5n$ as proved by Theorem 1. In addition, the fact that the number of edges are very close to $n$ for any $N$ is in agreement with Theorem $2$, where we proved that the algorithm usually generates topologies with $n(1+o(1))$ edges, especially when $N$ is large. In fact, if $N$ were equal to infinity, all the nodes would be within the range of each other, and the algorithm would generate the line topology with only the $n$ edges $(1,2),(2,3),\ldots,(n-1,n)$. 

\begin{figure}[h]
\center
\scalebox{0.55}{\includegraphics{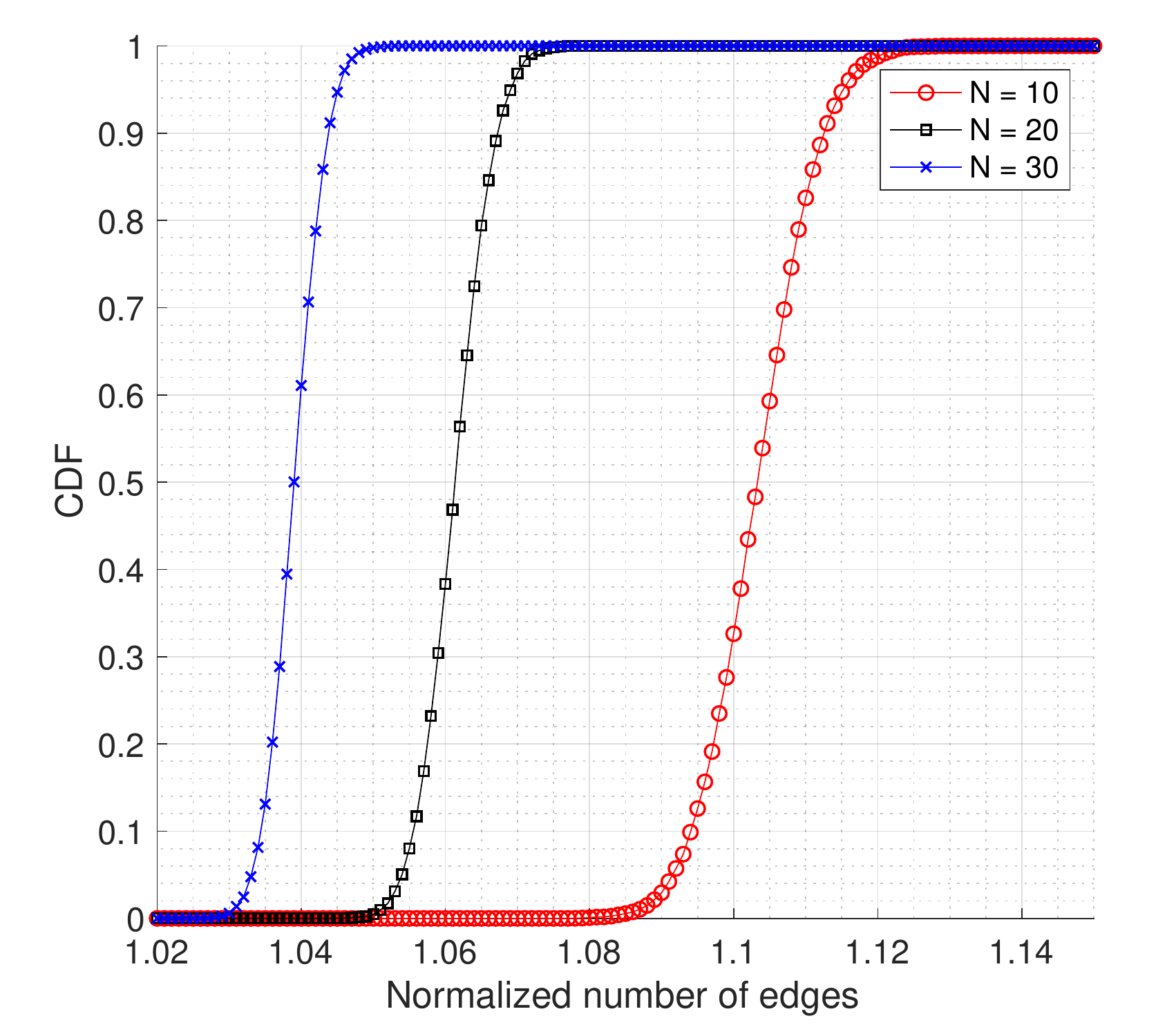}}
\caption{The CDFs of the normalized number of edges $\frac{|\mathcal{A}|}{n}$ for different node densities with Algorithm \ref{mainalgo}.}
\label{edgesimfig}
\end{figure}

In Fig. \ref{edgesimfigcomp}, we compare the number of edges provided by Algorithm 1 with other algorithms or topologies for a  node density of $20$. We consider the topologies generated by the XTC algorithm \cite{watt1}, Algorithm 2 with $\delta = 2$, the $k$-Neigh algorithm \cite{kneigh} for $k=6$, and the Gilbert graph. XTC is a powerful algorithm that can provide a degree-bounded topology with constant stretch factors. However, it requires neighbor distance information at the nodes. We recall from Section \ref{secdeglb} that Algorithm 2 for $\delta = 2$ guarantees (if at all feasible) a degree lower bound of $\delta = 2$ at every node, and a degree upper bound of $25$. In particular, for $\delta = 1$, Algorithm 2 is equivalent to Algorithm 1. According to the $k$-Neigh algorithm, each node connects to $k$ of its closest neighbors. The $k$-Neigh algorithm provides sparse topologies with typically low-degree nodes. However, it also requires neighbor distance information, and is not guaranteed to preserve connectivity. We have considered the choice $k=6$ for fairness in terms of the probability of connectivity: For $k=6$, the $k$-Neigh algorithm provides connectivity with probability $0.9904$, which is a negligible loss compared to the probability $0.9922$ of connectivity of the Gilbert graph. For $k=5$, the probability of connectivity with $k$-Neigh drops to $0.9681$. 

We can observe from Fig. \ref{edgesimfigcomp} that with probability $0.99$, Algorithm 1 provides the sparsest topology with (at most) $1.07n$ edges, followed by the XTC algorithm with $1.27n$ edges, Algorithm 2 for $\delta = 2$ with $1.88n$ edges, $k$-Neigh for $k=6$  with $3.59n$ edges, and finally the Gilbert graph with $9.64n$ edges. Thus, Algorithm 1 provides around $9$-fold reduction for the required number of edges for connectivity compared to the Gilbert topology. Also, compared to its closest competitor XTC, our algorithm reduces the required number of edges for connectivity by around $100\times\frac{1.27n-1.07n}{1.27n} \approx 16$ per cent. Moreover, unlike XTC, the reduction comes without the need for the extra neighbor distance information at networking nodes.

\begin{figure}[h]
\center
\scalebox{0.55}{\includegraphics{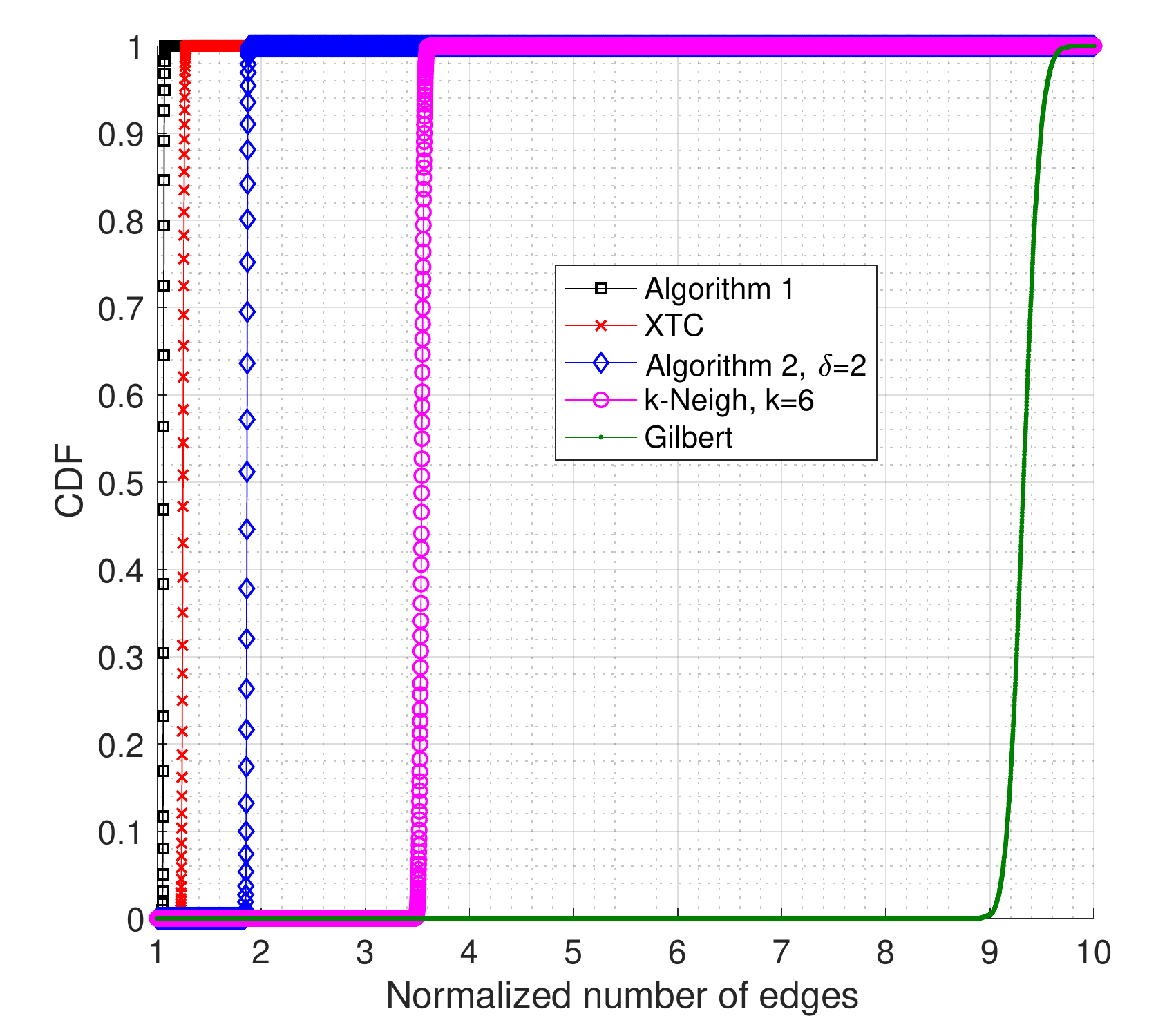}}
\caption{The CDFs of the normalized number of edges $\frac{|\mathcal{A}|}{n}$ for different algorithms and $N=20$.}
\label{edgesimfigcomp}
\end{figure}

In Fig. \ref{nodedegfig}, we show the probability mass functions (PMFs) corresponding to the degree of a given node of the network for different values of $N$. More specifically, let $I$ have a uniform PMF on the set $\{1,\ldots,n\}$. For a given $N$, the corresponding PMF at a given degree $d$ in Fig. \ref{nodedegfig} is then the probability that Node $I$ has degree $d$ in the final network topology provided by Algorithm 1.  We can also observe that almost all the nodes in the network have a degree of $6$ or less, which is in agreement with Theorem 2. Also, the fraction of nodes with degree $2$ increases as $N$ increases, and we expect it to approach to $1$ as $N\rightarrow\infty$ as a result of the aforementioned convergence to line topology.

\begin{figure}[h]
\center
\scalebox{0.55}{\includegraphics{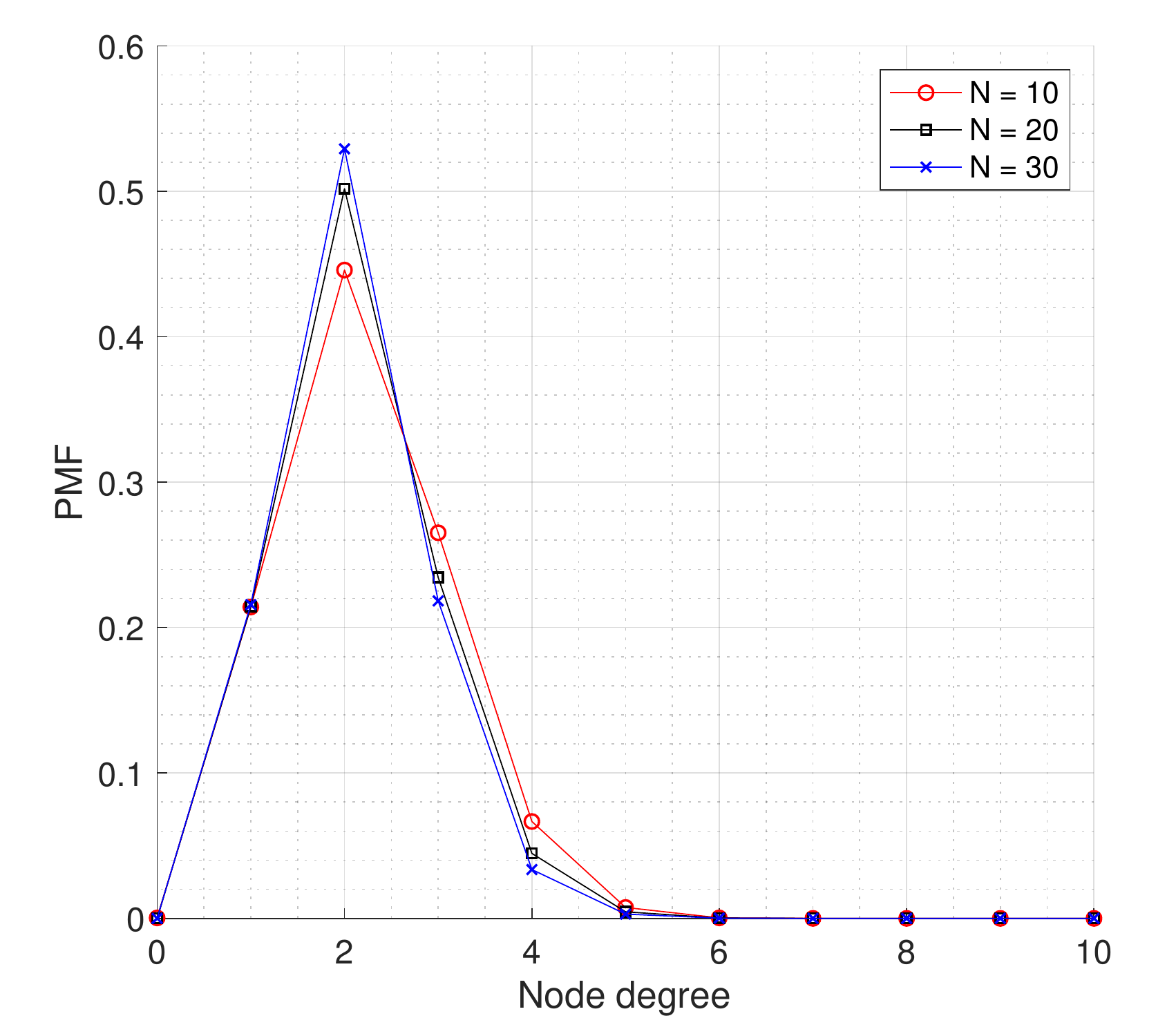}}
\caption{The PMFs of the degree of a given node for different node densities with Algorithm \ref{mainalgo}.}
\label{nodedegfig}
\end{figure}

We show the PMFs of the degrees of individual nodes and the corresponding expected node degrees in Fig. \ref{degsfordifferentindices} for the special case $N=20$. We can observe that nodes with lower or higher indices are more likely to have lower degrees. In fact, by design, the only way for a low-indexed node to ``gain'' degree in Algorithm 1 is by receiving connections from a higher-indexed node, and there can be at most $5$ such connections. Likewise, the only way for a high-indexed node to gain degree is by establishing connections to lower-indexed nodes. Similarly, there can be at most $5$ such connections. On the other hand, a medium-indexed node can potentially have many higher-indexed nodes as well as many lower-indexed neighboring nodes, implying a potential degree of $10$ in the final topology. Thus, intuitively, a medium-indexed node should typically have a larger average degree compared to a low- or high-indexed node. Fig. \ref{degsfordifferentindices} verifies this intuition.

\begin{figure}[h]
\center
\scalebox{0.55}{\includegraphics{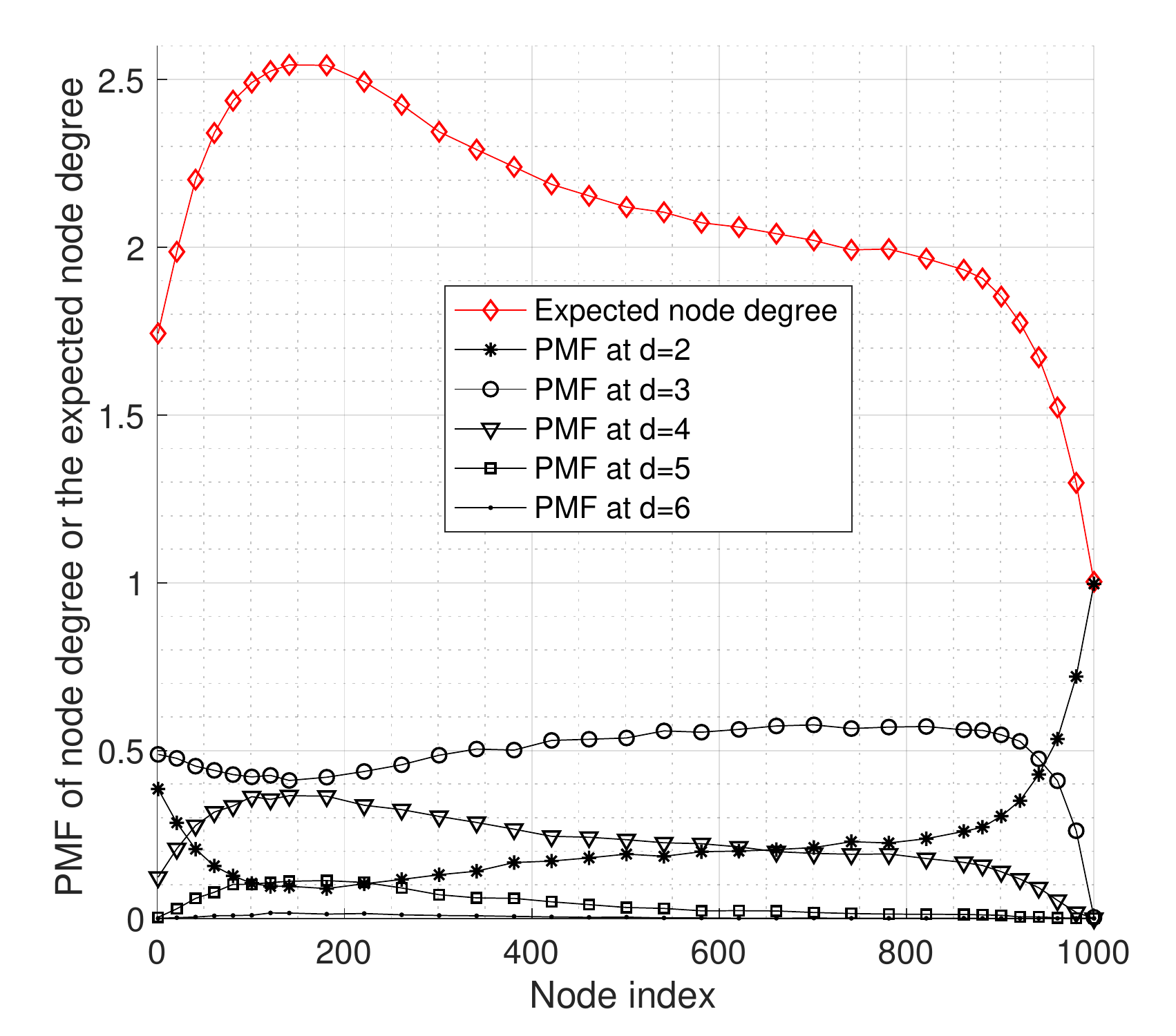}}
\caption{The expected value and the PMF of node degrees  for different node indices using Algorithm 1 and $N=20$.}
\label{degsfordifferentindices}
\end{figure}

In Fig. \ref{nodedegfigcomp}, we compare different algorithms with respect to their PMFs corresponding to the degree of a given node of the network. We consider a node density of $N=20$. Algorithm 1 provides the minimum possible average node degree of $2.12$, followed by an average degree of $2.50$ provided by XTC. Compared to the average of $18.63$ for the Gilbert graph, Algorithm 1 provides a $9$-fold reduction on the node degree.

\begin{figure}[h]
\center
\scalebox{0.55}{\includegraphics{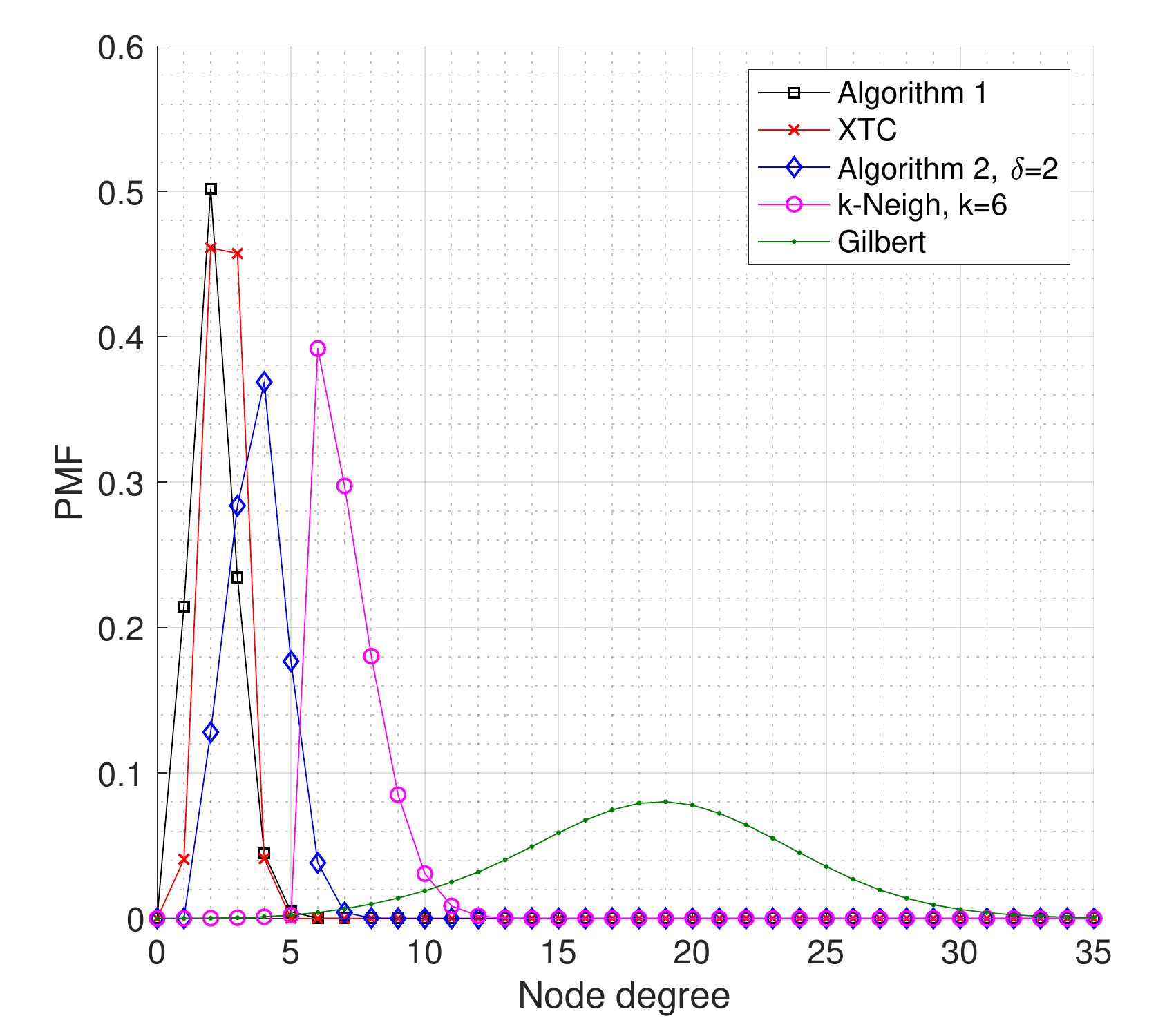}}
\caption{The PMFs of the degree of a given node for different algorithms and $N=20$.}
\label{nodedegfigcomp}
\end{figure}

Fig. \ref{maxnodedegfig} shows the PMFs of the maximum node degree of the network for Algorithm 1 and different node densities. According to Theorem 1, the PMFs should only take values on the set $\{0,\ldots,10\}$. In fact, for every value of $N$, we have not observed a single realization of node locations where the maximum node degree is $8$ or higher. Such realizations obviously exist (see Section II); Fig \ref{maxnodedegfig} rather suggests that they correspond to very rare events. Fig. \ref{maxnodedegfigcomp} provides the comparison of different algorithms in terms of the PMFs of the maximum node degrees. We can observe that the maximum degree with Algorithm 1 is more likely to be $5$ compared to a maximum degree of $4$ with XTC. The price to pay to guarantee a minimum degree of $2$ via Algorithm 2 is to increase the maximum node degree to $7$ with high probability. 

\begin{figure}[h]
\center
\scalebox{0.55}{\includegraphics{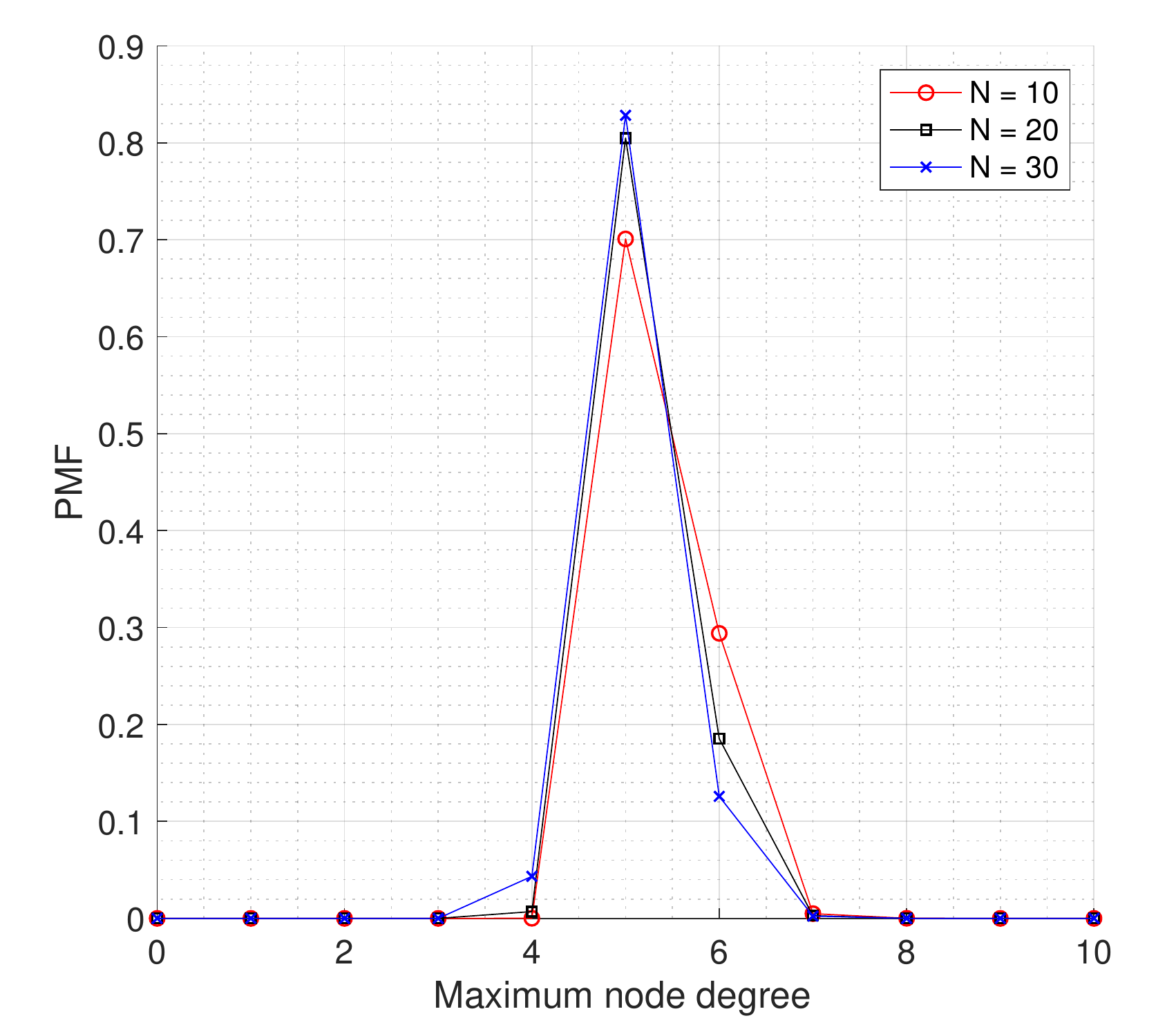}}
\caption{The PMFs of the maximum node degree for different node densities using Algorithm \ref{mainalgo}. }
\label{maxnodedegfig}
\end{figure}

\begin{figure}[h]
\center
\scalebox{0.55}{\includegraphics{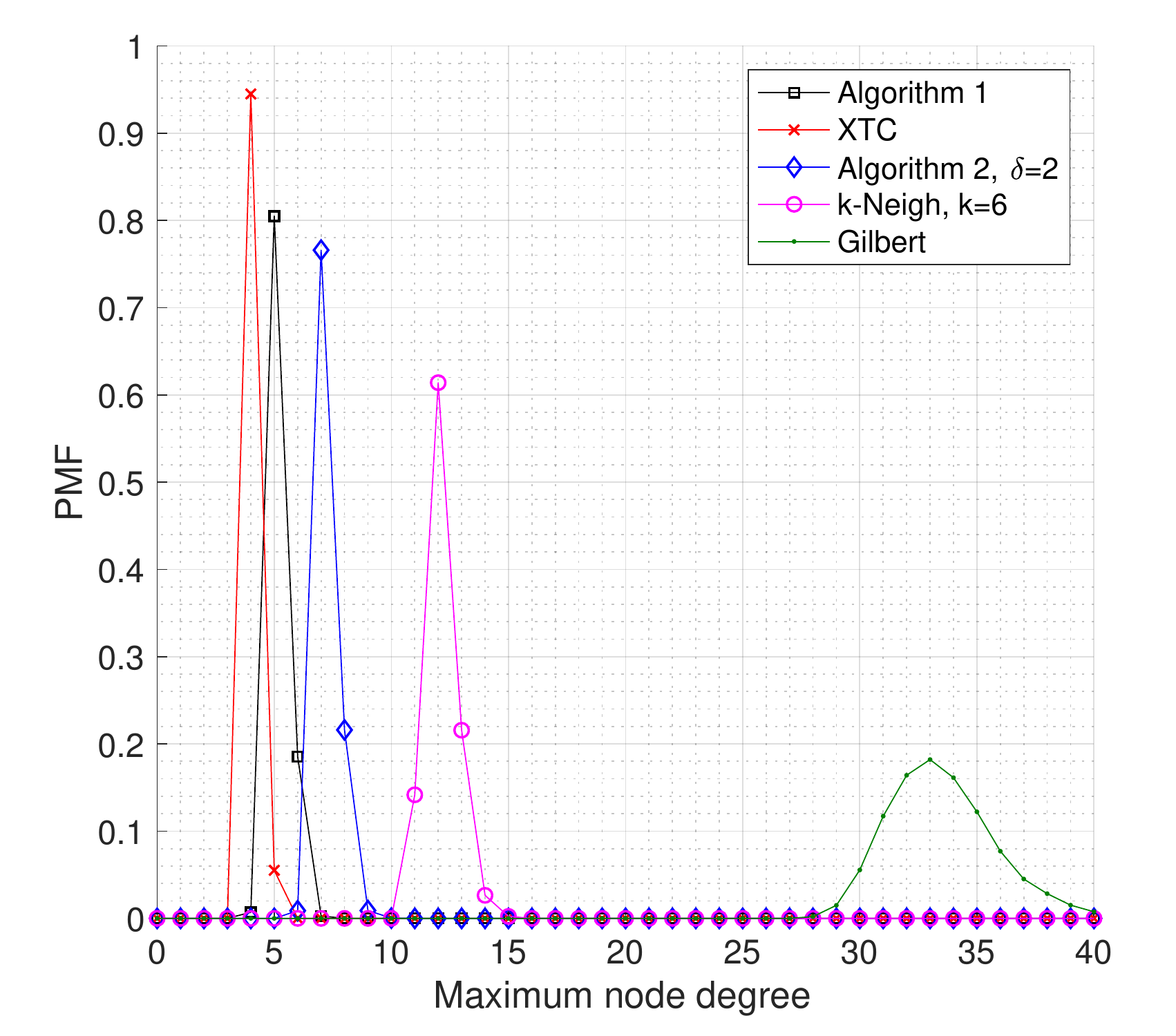}}
\caption{The PMFs of the maximum node degree for different algorithms and $N=20$.}
\label{maxnodedegfigcomp}
\end{figure}

In Fig. \ref{stretchsimfig}, we show the CDFs of the $\alpha$-stretch factors associated with two given distinct nodes of the network for $\alpha \in \{0,1,2\}$ (hop-, distance-, and power-stretch factors) and $N\in\{10,20,30\}$. Specifically, let $(I_1,I_2)$ have a uniform PMF on the set $\{(i,j):1\leq i < j \leq n\}$. Given $N$ and $\alpha$, the corresponding CDF evaluated at a given stretch factor $t$ in Fig. \ref{nodedegfig} is then given by $\mathrm{Pr}[c_{\alpha}(I_1,I_2;\mathcal{A}) \leq tc_{\alpha}(I_1,I_2;g(\mathcal{V}))]$. 

We can observe that all the CDFs remain less than $1$ at every finite stretch factor. This means that the algorithm cannot provide a constant $\alpha$-stretch factor for any $\alpha \geq 0$. This result is not surprising as by Theorems 3 and 4, any topology control algorithm that solely relies on neighborhood information will necessarily have unbounded stretch factors. Still, we can observe that our algorithm keeps the stretch factors low with high probability, at least for some cases of $\alpha$ and $N$. For example, for the network with $N=20$ that is connected for more than $99\%$ of the time, any two nodes that are $h$ hops away in the Gilbert graph will be no more than $5h$ hops away in $(\mathcal{V},\mathcal{A})$ for more than $90\%$ of the time. 

\begin{figure}[h]
\center
\scalebox{0.55}{\includegraphics{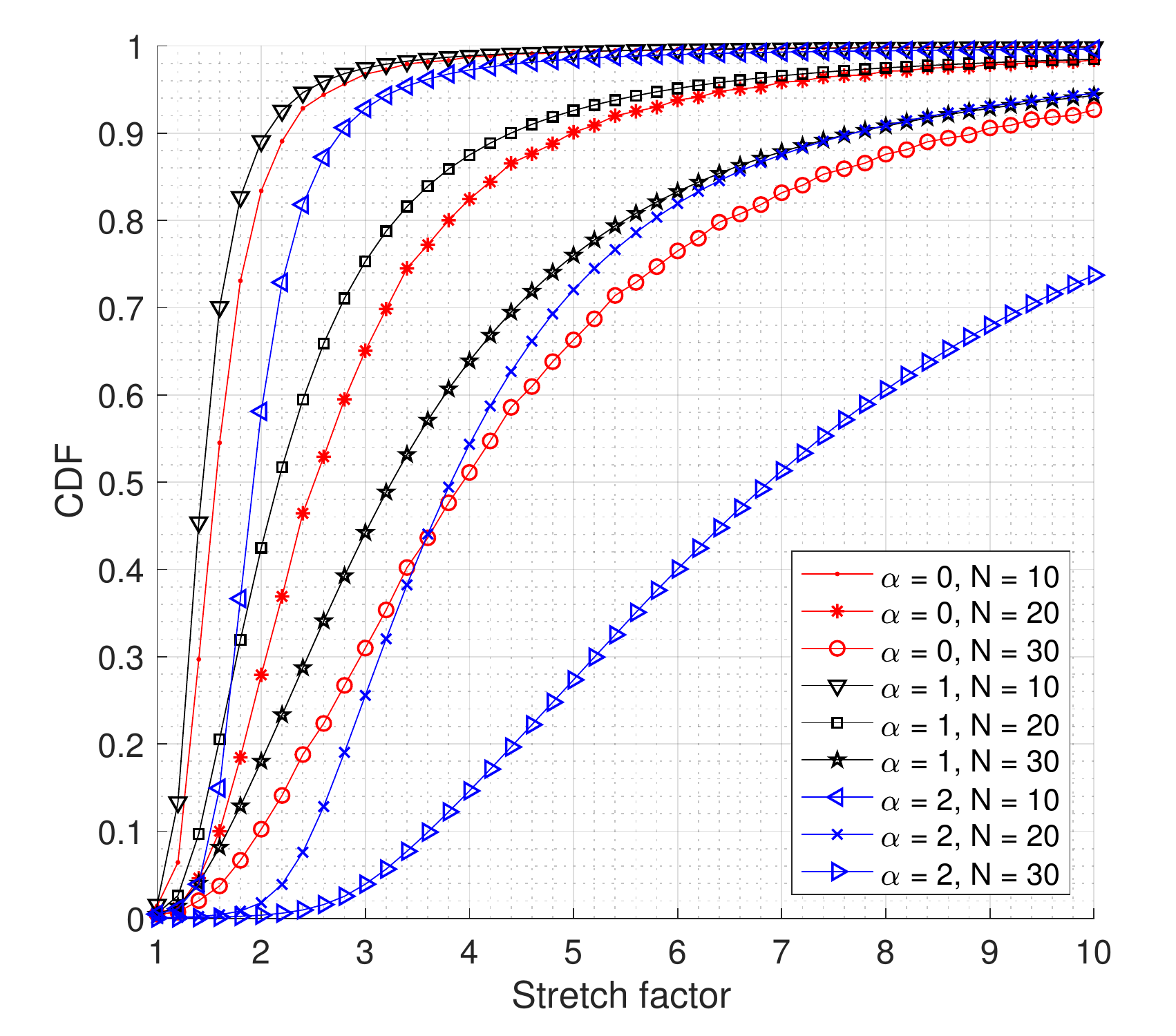}}
\caption{ The CDFs of $\alpha$-stretch factors for different node densities with Algorithm \ref{mainalgo}.}
\label{stretchsimfig}
\end{figure}

Comparison of different algorithms in terms of their stretch factors are provided in Fig. \ref{stretchsimfigcomp}. Typically, algorithms that result in more edges provide a better stretch factor distribution. For example, the $k$-Neigh algorithm outperforms all other algorithms for the case $\alpha\in\{1,2\}$. XTC also provides very good performance for $\alpha\in\{1,2\}$ despite providing a very sparse topology: It is only slightly worse than the $k$-Neigh topologies, outperforms the denser topologies provided by Algorithm 2 for $\delta = 2$, and significantly outperforms the sparser topology of Algorithm 1. Interestingly, for $\alpha = 0$, Algorithm 1, despite inducing a sparser topology compared to XTC, outperforms XTC by a significant margin for a wide range of stretch factors. Algorithm 2 for $\delta \!=\! 2$ outperforms even the much denser $k$-Neigh topology in certain cases. Therefore, Algorithms 1 and 2 can provide very good performance especially in terms of the $0$-stretch factors.

\begin{figure}[h]
\center
\scalebox{0.55}{\includegraphics{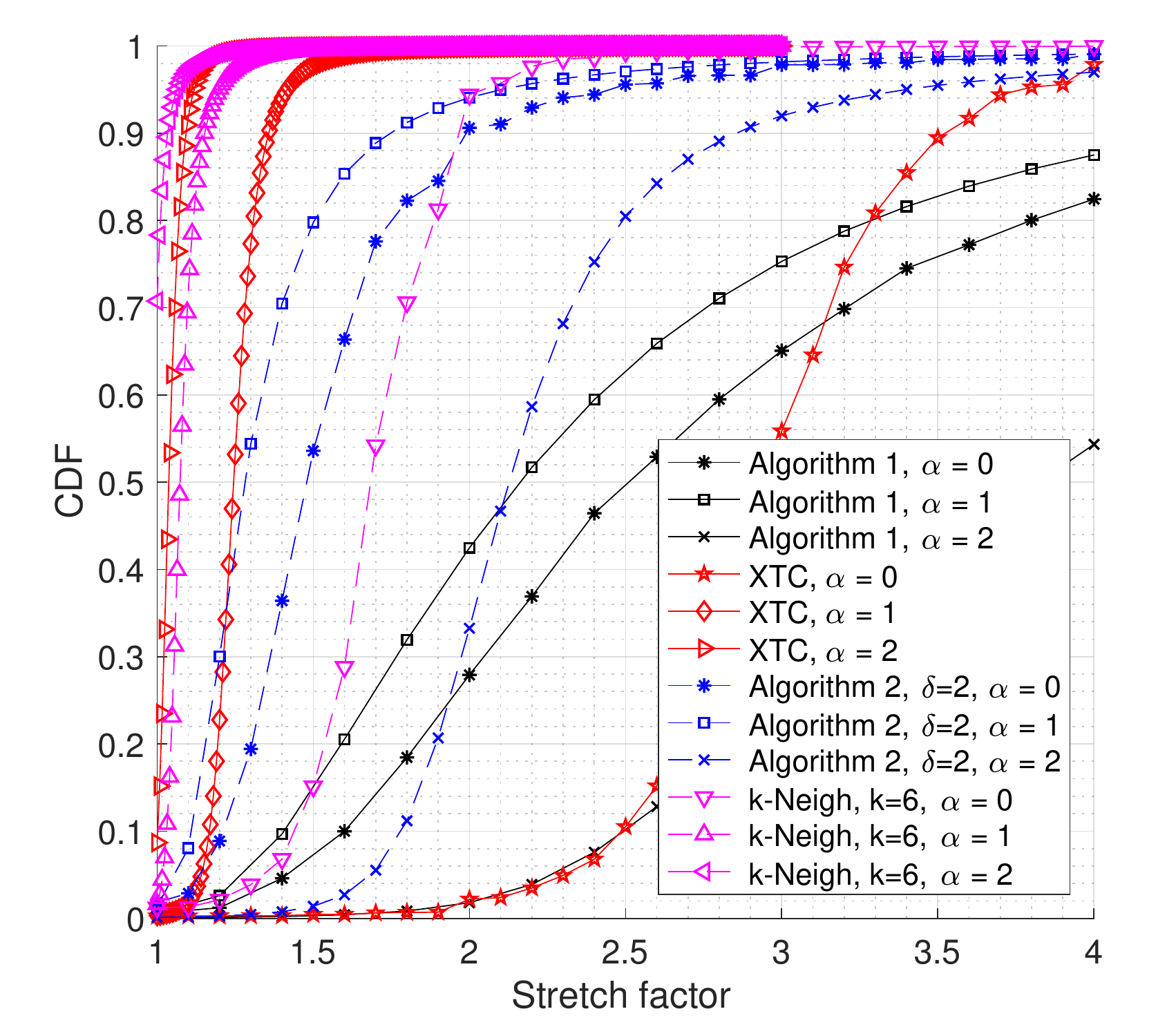}}
\caption{ The CDFs of $\alpha$-stretch factors for different algorithms and $N=20$.}
\label{stretchsimfigcomp}
\end{figure}

\section{Conclusions}
\label{secvi}
We have studied the problem of topology control in wireless ad-hoc networks consisting of $n$ nodes that are located on the plane. We have considered the disk-connectivity model where any two given neighboring nodes that lie within a certain communication range can be directly connected. We have addressed the fundamental problem of generating network topologies with the practically-relevant graph-theoretical properties such as connectivity or degree-boundedness.

We have observed that all the previous work in the literature require detailed geographical/locational information at each node to achieve these desired properties. We have shown that, in fact, a sufficient condition to achieve degree-bounded connectivity is just for each node to know the identification numbers of its one- and two-hop neighbors - no distance/directional information is needed whatsoever. Our corresponding local topology control algorithm guarantees a connected network with $5n$ edges and a maximum node degree of $10$. We have shown that for most networks, these numbers are in fact much lower. We have also designed an algorithm that can provide an upper bound as well as a lower bound on node degrees.

\begin{IEEEbiography}[{\includegraphics[width=1in,height=1.25in,clip,keepaspectratio]{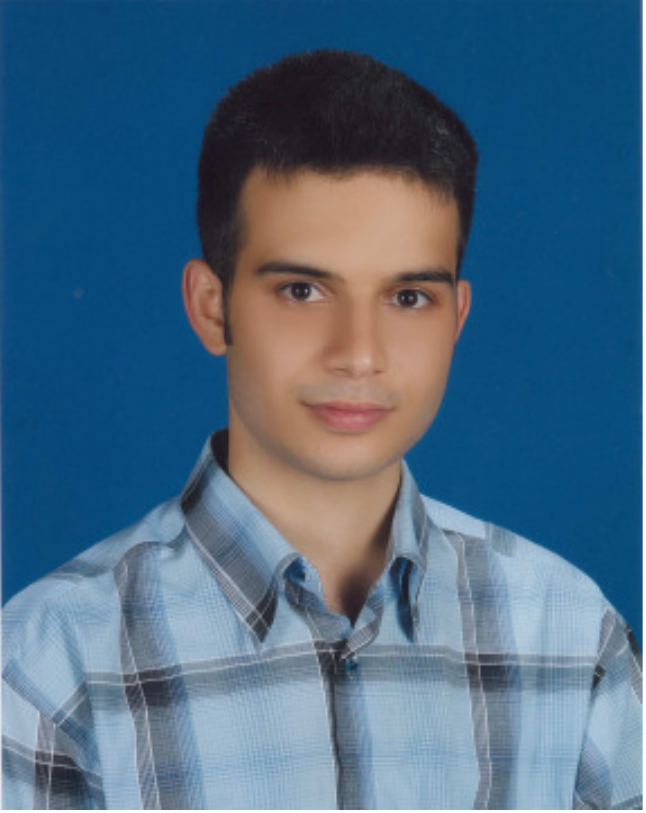}}]{Erdem Koyuncu} is an Assistant Professor at the Department of Electrical and Computer Engineering (ECE) of the University of Illinois at Chicago (UIC). He received the B.S. degree from the Department of Electrical and Electronics Engineering of Bilkent University in 2005. He received the M.S. and Ph.D. degrees in 2006 and 2010, respectively, both from the Department of Electrical Engineering and Computer Science of the University of California, Irvine (UCI). Between Jan. 2011 and Aug. 2016, he was a Postdoctoral Scholar at the Center for Pervasive Communications and Computing of UCI. Between Aug. 2016 and Aug. 2018, he was a Research Assistant Professor at the ECE Department of UIC. 
\end{IEEEbiography}

\begin{IEEEbiography}[{\includegraphics[width=1in,height=1.25in,clip,keepaspectratio]{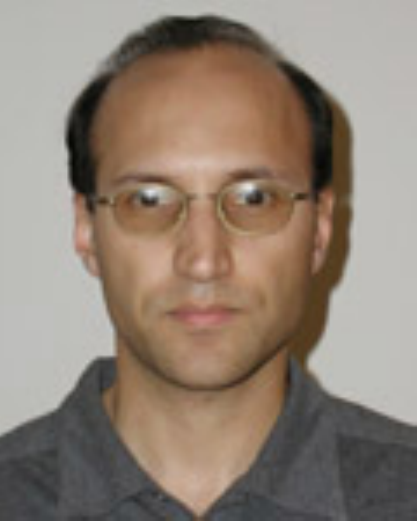}}]{Hamid Jafarkhani} is a Chancellor's Professor at the Department of Electrical Engineering and Computer Science,
University of California, Irvine, where he is also the Director of Center
for Pervasive Communications and Computing and the Conexant-Broadcom
Endowed Chair. He was a a Visiting Scholar at Harvard University in 2015 and a Visiting Professor at California Institute of Technology in 2018. Among his awards are the IEEE Marconi Prize Paper Award in Wireless Communications,  
the IEEE Communications Society Award for Advances in Communication, and the IEEE Eric E. Sumner Award. 

Dr. Jafarkhani is listed as a highly cited researcher in http://www.isihighlycited.com.
According to the Thomson Scientific, he is one of the top 10 most-cited researchers in the 
field of ``computer science'' during 1997-2007. 
He is the 2017 Innovation Hall of Fame Inductee at the University of Maryland's School of Engineering. 
He is a Fellow of AAAS, an IEEE Fellow, 
and the author of the book ``Space-Time Coding: Theory and Practice.''
\end{IEEEbiography}

\end{document}